\documentclass[a4paper,USenglish,numberwithinsect,cleveref,autoref,thm-restate]{lipics-v2021}
%This is a template for producing LIPIcs articles. 
%See lipics-v2021-authors-guidelines.pdf for further information.
%for A4 paper format use option "a4paper", for US-letter use option "letterpaper"
%for british hyphenation rules use option "UKenglish", for american hyphenation rules use option "USenglish"
%for section-numbered lemmas etc., use "numberwithinsect"
%for enabling cleveref support, use "cleveref"
%for enabling autoref support, use "autoref"
%for anonymousing the authors (e.g. for double-blind review), add "anonymous"
%for enabling thm-restate support, use "thm-restate"
%for enabling a two-column layout for the author/affilation part (only applicable for > 6 authors), use "authorcolumns"
%for producing a PDF according the PDF/A standard, add "pdfa"

\usepackage{tikz}
\usetikzlibrary{decorations.shapes,decorations.pathreplacing,decorations.pathmorphing}
\usetikzlibrary{arrows,matrix,shapes}
\usetikzlibrary{positioning}
\tikzset{
        stars/.style={star,inner sep=2pt}
    }

\usepackage{todonotes}
\usepackage[ruled,vlined,linesnumbered]{algorithm2e}
\usepackage{hyperref}
\usepackage{xcolor}
\usepackage{cleveref}

\usepackage{microtype}%if unwanted, comment out or use option "draft"

\newtheorem{reduction rule}{Reduction Rule}[section]

%\pdfoutput=1 %uncomment to ensure pdflatex processing (mandatatory e.g. to submit to arXiv)
%\hideLIPIcs  %uncomment to remove references to LIPIcs series (logo, DOI, ...), e.g. when preparing a pre-final version to be uploaded to arXiv or another public repository

\newcommand{\cA}{{\mathcal A}}
\newcommand{\cB}{{\mathcal B}}

\newcommand{\cT}{{\mathcal T}}
\newcommand{\ccC}{{\mathcal C}}

\newcommand{\OO}{{\mathcal O}}

\newcommand{\XX}{{\mathcal X}}

\newcommand{\bN}{{\mathbb N}}

\newcommand{\dptw}{{\sf dp}}

\newcommand{\hind}{{\textsf{h-index}}}

\newcommand{\sclub}{{\sc $s$-Club}}
\newcommand{\twoclub}{{\sc $2$-Club}}
\newcommand{\clique}{{\sc Clique}}

\newcommand{\trangle}{{\sc Vertex $r$-Triangle}}
\newcommand{\etrangle}{{\sc Edge $r$-Triangle}}
\newcommand{\vonetriangle}{{\sc Vertex 1-Triangle}}

\newcommand{\rVT}{{\sc $r$-VT}}

\newcommand{\rET}{{\sc $r$-ET}}

\newcommand{\trw}{{\sf tw}}
\newcommand{\fes}{{\sf fes}}
\newcommand{\vc}{{\sf vc}}

\newcommand{\fvs}{{\sf fvs}}
\newcommand{\mdeg}{{\sf maxdeg}}
\newcommand{\bnw}{{\sf bw}}

\newcommand{\adjc}{{\rm adj}}
\newcommand{\trn}{{\rm trn}}

\newcommand{\vtrangle}{{vertex $r$-triangle property}}
\newcommand{\egtrangle}{{edge $r$-triangle property}}

\newcommand{\nka}{\textrm{$NP \subseteq coNP/poly$}}

\newcommand{\defproblem}[3]{
  \vspace{3mm}
\noindent\fbox{
  \begin{minipage}{.95\textwidth}
  \begin{tabular*}{\textwidth}{@{\extracolsep{\fill}}lr} \textsc{#1} 
   \\ \end{tabular*}
  {\bf{Input:}} #2  \\
  {\bf{Question:}} #3
  \end{minipage}
  }
  \vspace{2mm}
}

\newtheorem{nclaim}{Subclaim}[section]

%\graphicspath{{./graphics/}}%helpful if your graphic files are in another directory

\bibliographystyle{plainurl}% the mandatory bibstyle

\title{On the Structural Parameterizations of Finding 2-Clubs with Triangle Constraints} %TODO Please add

\titlerunning{Structural Parameterizations of 2-Clubs with Triangle Constraints} %TODO optional, please use if title is longer than one line

\author{Ashwin Jacob}{National Institute of Technology Calicut, India}{ashwinjacob@nitc.ac.in}{https://orcid.org/0000-0003-4864-043X}{}%TODO mandatory, please use full name; only 1 author per \author macro; first two parameters are mandatory, other parameters can be empty. Please provide at least the name of the affiliation and the country. The full address is optional. Use additional curly braces to indicate the correct name splitting when the last name consists of multiple name parts.

\author{Diptapriyo Majumdar}{Indraprastha Institute of Information Technology Delhi, New Delhi, India}{diptapriyo@iiitd.ac.in}{https://orcid.org/0000-0003-2677-4648}{}

\author{Raghav Sakhuja}{Indraprastha Institute of Information Technology Delhi, New Delhi, India}{raghav21274@iiitd.ac.in}{}{}

\authorrunning{Jacob et al.} %TODO mandatory. First: Use abbreviated first/middle names. Second (only in severe cases): Use first author plus 'et al.'

\Copyright{Jane Open Access and Joan R. Public} %TODO mandatory, please use full first names. LIPIcs license is "CC-BY";  http://creativecommons.org/licenses/by/3.0/

\ccsdesc[500]{Theory of computation~Graph algorithms analysis}
\ccsdesc[500]{Theory of computation~Parameterized complexity and exact algorithms}
%\ccsdesc[100]{\textcolor{red}{Replace ccsdesc macro with valid one}} %TODO mandatory: Please choose ACM 2012 classifications from https://dl.acm.org/ccs/ccs_flat.cfm 

\keywords{Parameterized Complexity, s-Club, Treewidth, Structural Parameterizations, Triangle Constraints} %TODO mandatory; please add comma-separated list of keywords

\category{} %optional, e.g. invited paper

\relatedversion{} %optional, e.g. full version hosted on arXiv, HAL, or other respository/website
%\relatedversiondetails[linktext={opt. text shown instead of the URL}, cite=DBLP:books/mk/GrayR93]{Classification (e.g. Full Version, Extended Version, Previous Version}{URL to related version} %linktext and cite are optional

%\supplement{}%optional, e.g. related research data, source code, ... hosted on a repository like zenodo, figshare, GitHub, ...
%\supplementdetails[linktext={opt. text shown instead of the URL}, cite=DBLP:books/mk/GrayR93, subcategory={Description, Subcategory}, swhid={Software Heritage Identifier}]{General Classification (e.g. Software, Dataset, Model, ...)}{URL to related version} %linktext, cite, and subcategory are optional

%\funding{(Optional) general funding statement \dots}%optional, to capture a funding statement, which applies to all authors. Please enter author specific funding statements as fifth argument of the \author macro.

%\acknowledgements{I want to thank \dots}%optional

\nolinenumbers %uncomment to disable line numbering

%Editor-only macros:: begin (do not touch as author)%%%%%%%%%%%%%%%%%%%%%%%%%%%%%%%%%%
\EventEditors{John Q. Open and Joan R. Access}
\EventNoEds{2}
\EventLongTitle{42nd Conference on Very Important Topics (CVIT 2016)}
\EventShortTitle{CVIT 2016}
\EventAcronym{CVIT}
\EventYear{2016}
\EventDate{December 24--27, 2016}
\EventLocation{Little Whinging, United Kingdom}
\EventLogo{}
\SeriesVolume{42}
\ArticleNo{23}
%%%%%%%%%%%%%%%%%%%%%%%%%%%%%%%%%%%%%%%%%%%%%%%%%%%%%%

\begin{document}

\maketitle

%TODO mandatory: add short abstract of the document

\begin{abstract}
Given an undirected graph $G = (V, E)$ and an integer $k$, the
% an {\em $s$-club} of $G$ is a subset $S \subseteq V(G)$ such that for every $x, y \in S$, the distance between $x$ and $y$ in $G[S]$ is at most $s$.
%Given a graph $G$, 
{\sclub} asks if $G$ contains a vertex subset $S$ of at least $k$ vertices such that $G[S]$ has diameter at most $s$.
Recently, {\trangle} {\sclub}, and {\etrangle} {\sclub} that generalize the notion of {\sclub} have been studied by Garvardt et al. [TOCS-2023, IWOCA-2022] from the perspective of parameterized complexity.
%These variants generalize the notion of {\sclub}. 
Given a graph $G$ and an integer $k$, the {\trangle} {\sclub} asks if there is an {\sclub} $S$ with at least $k$ vertices such that every vertex $u \in S$ is part of at least $r$ triangles in $G[S]$.
%Similarly, for {\etrangle} {\sclub}, the objective is to find an $s$-club $S$ of size at least $k$ such that every edge of $G[S]$ is part of at least $r$ triangles in $G[S]$.
%And for {\seeded} {\sclub}, the objective is to find an $s$-club $S$ of size at least $k$ that contains a fixed vertex subset $W$.
In this paper, we initiate a systematic study of {\trangle} {\twoclub} for every integer $r \geq 1$ from the perspective of structural parameters of the input graph.
In particular, we provide an FPT algorithm for {\trangle} {\twoclub} when parameterized by the treewidth (${\trw}$) of the input graph and corresponding lower bound.
% and an XP algorithm when parameterized by the $h$-index of the input graph.
Additionally, when parameterized by the feedback edge number (${\fes}$) of the input graph, 
%for every $s \geq 2$, 
we provide a kernel of $\OO({\fes})$ edges for {\trangle} {\twoclub}.
\end{abstract}

%\newpage

%\tableofcontents

%\newpage
 
\section{Introduction}
\label{sec:intro}

%%2-club-paper

Given an undirected graph $G = (V, E)$, a {\em clique} is a subset $S \subseteq V(G)$ such that every pair of vertices of $S$ are adjacent.
Given a graph $G = (V, E)$ and an integer $\ell$, {\clique} problem asks if $G$ has a clique having at least $\ell$ vertices.
{\clique} is a well-studied problem in the domain of algorithms and complexity theory, and is known to be W[1]-Complete from the perspective of parameterized complexity.
Observe that a clique is a subgraph of a graph that has diameter at most one.
A natural relaxation of {\clique} is {\sclub} that asks to find a set $S$ of at least $\ell$ vertices that induces a subgraph of diameter at most $s$.
This relaxation of $s$-club was proposed by Mokken \cite{Mokken16}.
%We formally define the {\sclub} problem as follows.
%
%\defproblem{{\sclub}}{An undirected graph $G = (V, E)$ and an integer $\ell$.}{Is there $S \subseteq V(G)$ with at least $\ell$ vertices such that the diameter of $G[S]$ is at most $s$?}
Formally, the input to {\sclub} is an undirected graph $G$ and an integer $\ell$, and the question asked is whether $G$ has a set $S$ of at least $\ell$ vertices such that the diameter of $G[S]$ is at most $s$.
If $s=1$, then {\sclub} is precisely the same as {\clique} problem, and if $s = 2$, then {\sclub} is the same as {\twoclub} problem.
Both {\twoclub} and {\sc $3$-Club} have been found as natural relaxations of {\clique}.
In particular, there have been numerous works both from the perspective of theory, as well as from experiments, that concern finding 2-clubs and 3-clubs of a graph (see \cite{BalasundaramBT05,Balasundaram09,HartungKN15,HartungKNS15,PajouhB12,PajouhBH16}). %,Pasupuleti08}).

Later, this model of $s$-club was further augmented by several various other constraints \cite{AlmeidaB2019,Kanawati2014,PattilloYB2013} which asks to find an $s$-club but it also satisfies some additional properties.
One such model (see \cite{AlmeidaB2019}) asks that every vertex (respectively, every edge) of the $s$-club $S$ has to be part of at least $r$ triangles in $G[S]$.
%Th graph property was introduced by Almeida and Bras \cite{AlmeidaB2019}.
% is denoted by {\vtrangle} (respectively, {\egtrangle}) in \cite{AlmeidaB2019}.
Formally, a vertex subset $S \subseteq V(G)$ fulfills/satisfies {\em {\vtrangle}} (respectively, the {\em {\egtrangle}}) if every $v \in S$ (respectively, every edge of $G[S]$) is part of at least $r$ triangles in $G[S]$.
%Similarly, a vertex subset $S \subseteq V(G)$ fulfills/satisfies {\em {\egtrangle}} if every edge of $G[S]$ is part of at least $r$ triangles in $G[S]$. 
%Another model asks that
The following problem asks to find an $s$-club with at least $\ell$ vertices that satisfies the {\vtrangle}.
% and is formally defined as follows.
% is formally denoted by {\trangle} {\sclub} that is formally defined as follows.

\defproblem{{\trangle} {\sclub} ({\rVT} {\sclub})}{An undirected graph $G = (V, E)$ and an integer $\ell$.}{Is there $S \subseteq V(G)$ with at least $\ell$ vertices such that $S$ is an $s$-club and $S$ satisfies {\vtrangle}?}

%Similarly, a vertex subset $S \subseteq V(G)$ satisfies the {\em {\egtrangle}} if every edge of $G[S]$ is part of at least $r$ triangles in $G[S]$. 
Similarly, {\etrangle} {\sclub} ({\rET} {\sclub}) asks to find an $s$-club $S$ with at least $\ell$ vertices that satisfies {\egtrangle}.
%%%
%%%
The $s$-club is one such model that captures cohesive substructures in social and biological networks, as they capture groups of vertices with small pairwise distances.
But plain $s$-clubs (even $2$-clubs) can produce fragile communities that are held together by only a few edges.
Augmenting $s$-clubs with triangle constraints eliminates such weaknesses and yields subgraphs that are of diameter $s$ and are robustly interconnected.
%%%%%
%%%
%\todo[inline]{Diptapriyo: not to keep in short verion.}
%
%\defproblem{{\etrangle} {\sclub}}{An undirected graph $G = (V, E)$ and an integer $\ell$.}{Is there $S \subseteq V(G)$ with at least $\ell$ vertices such that $S$ is an $s$-club and $S$ satisfies {\egtrangle} property?}
%Clearly, if $s = 2$, then the considered problem is {\trangle} {\twoclub}.
%Observe that the two above 
%%%%SEEDED PART
%In their paper, Garvardt et al. \cite{GarvardtKS23} also defined another variant called {\seeded}  {\sclub} which is defined as follows.
%%%%
%\defproblem{{\seeded} {\sclub}}{An undirected graph $G = (V, E)$, a set $W \subseteq V(G)$, and an integer $\ell$.}{Does $G$ contains an $s$-club $S$ with at least $\ell$ vertices such that $W \subseteq S$?}
%%%%
%This particular variant with seed constraint was motivated from the context of community detection containing some set of fixed vertices \cite{Kanawati2014,WhangGD13}.
%%%%
From the perspective of parameterized complexity, natural parameters are the {\em solution size} ($\ell$) and {\em number of triangles} ($r$) that arrive from the formulation of the problem itself.
%A systematic study has been conducted for {\trangle} {\sclub} and {\etrangle} {\sclub} have been studied by Garvardt et al. \cite{GarvardtKS23} from the perspective of parameterized complexity.
Garvardt et al. \cite{GarvardtKS23} initiated the study of {\vtrangle} {\twoclub} and {\etrangle} {\twoclub} from the perspective of parameterized complexity.
%\footnote{Due to lack of space, we refer to Section \ref{sec:FPT-prelims} in appendix for more formal definitions on parameterized complexity and kernelization.}.
The authors in their paper \cite{GarvardtKS23} have proved that if solution size is considered as the parameter, then the inclusion of {\vtrangle} requirement makes {\twoclub} W[1]-hard.
To the best of our knowledge, {\vtrangle} {\sclub} and {\etrangle} {\sclub} are unexplored with respect to the structural parameters that are described by the structural properties of the input graph.
Hence, it is interesting to explore these two problems with respect to the {\em structural parameters}.

%\subparagraph*{Known Results:} Hartung et al. \cite{HartungKN15,HartungKNS15} have explored {\twoclub} with respect to various structural parameters of the input graph.
%The authors in their paper proved that {\twoclub} is FPT when treewidth of the graph is considered as the parameter but is W-hard when parameterized by the ${\hind}$ of the input graph.
%Additionally, they also provided XP algorithm when parameterized by the $h$-index.
%In addition, they considered feedback edge number (${\fes}$)

%One of the well-known structural parameter is the {\em treewidth} (denoted by ${\trw}$) of the input graph .
%In this paper, we initiate a systematic study of {\trangle} {\twoclub} with respect to the structural parameters.

%In this paper, we initiate a systematic study of {\trangle} {\twoclub} with respect to several structural parameters.

\subparagraph*{Our Results.} Being motivated by the results in \cite{GarvardtKS23} and the study of {\twoclub} under structural parameters by \cite{HartungKN15} and \cite{HartungKNS15}, we initiate a systematic study of {\trangle} {\twoclub} with respect to structural parameters of the input graph in this paper.
%A hierarchy of parameters and our results are highlighted in Figure \ref{fig:parameter-hierarchy}.
In particular, our objective is to develop a concrete understanding if the tractability (or hardness) results of {\twoclub} transfer to tractability (or hardness) results to {\trangle} {\twoclub} when structural parameters are considered.
%regarding the answers to the  following two questions.
%\begin{enumerate}
	%\item\label{classical-translation} If {\twoclub} is polynomial-time solvable (or NP-hard, respectively), then is {\trangle} {\twoclub} NP-hard
%	\item\label{fpt-translation} If {\twoclub} is FPT when parameterized by a structural parameter $k$, then does  {\trangle} {\twoclub} remain FPT when parameterized by $k$?
%	\item\label{XP-translation} If {\twoclub} admits an XP algorithm when parameterized by a structural parameter $k$, then does {\trangle} {\twoclub} also admit an XP algorithm when parameterized by $k$?
%	\item\label{hardness-translation} If {\twoclub} is W-hard (or para-NP-hard, respectively) when parameterized by a structural parameter $k$, then does {\trangle} {\twoclub} remain W-hard (or para-NP-hard) when parameterized by $k$?
%\end{enumerate}
%%%%
The structural parameters considered in this paper are the treewidth (${\trw}$), the size of a minimum feedback edge set (${\fes}$), and the ${\hind}$ of the input graph. 
We refer to Section \ref{sec:prelims} for definitions about various structural parameters.
There are two motivations behind the choice of these parameters.
First, the parameterized complexity status of {\trangle} {\twoclub} is largely unexplored with respect to structural parameters to the best of our knowledge.
Second, it is worth interesting to determine if the FPT (or hardness or polynomial kernelization) status of {\twoclub} transfer to {\trangle} {\twoclub} for various structural parameters.
%We refer to Figure \ref{fig:parameter-hierarchy} for a hierarchy of parameters.

%%%%FIGURE OMITTED FOR NOW
%\iffalse
%\begin{figure}[t]
%\centering
%	\includegraphics[scale=0.22]{parameter-hierarchy.eps}
%	\caption{Hierarchy of structural parameters of undirected graphs. An arrow from the parameter $\alpha$ to the parameter $\beta$ means that $\beta$ is upper bounded by a function of $\alpha$. The main results proved in this paper are marked with a $\star$.}
%\label{fig:parameter-hierarchy}	
%\end{figure}
%\fi

%Hartung et al. proved that {\twoclub} is FPT when parameterized by the treewidth (${\trw}$) of the input graph \cite{HartungKN15} and admits a polynoial kernel when parameterized by the mininum feedback edge set size (${\fes}$) of the input graph. 
%Additionally, they 
%An overview of our result is provided in Figure \ref{fig:parameter-hierarchy}.

Our first result is an FPT algorithm for {\trangle} {\twoclub} when parameterized by the treewidth (${\trw}$) of the input graph.
%It was proved in \cite{HartungKN15} that {\twoclub} parameterized by the treewidth is FPT.
%Some of the celebrated results by Courcelle \cite{Courcelle90,Courcelle92} suggest that if a graph problem can be described by a Monadic Second Order Logic (MSO) of constant size, then the corresponding problem admits an algorithm that runs in $f(\trw)\cdot n^{\OO(1)}$-time.
%Unfortunately, the function $f$ is a tower of exponents whose size depends on the size of the formula.
%%%
%Over the past 2-3 decades, there have been systematic study of designing algorithms for several specific problems when parameterized by the treewidth of the graph.
We prove the following result for {\trangle} {\twoclub} in Section \ref{sec:FPT-treewidth}.
%In particular, our following result not only provides an affirmative answer

\begin{restatable}{theorem}{ThmTwDPTriangle}
\label{thm:triangle-tw-dp}
For every $r \geq 1$, {\trangle} {\twoclub} parameterized by ${\trw}$, the treewidth of the input graph admits a $2^{2^{\OO({\trw})}}n^{\OO(1)}$-time algorithm.
\end{restatable}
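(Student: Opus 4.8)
The plan is to design a dynamic-programming algorithm over a nice tree decomposition $(\mathbb{T}, \{X_t\})$ of width $\OO(\trw)$, processed bottom-up. The two constraints behave very differently: the {\vtrangle} is essentially local, whereas the diameter-$2$ condition is global, and handling the latter inside a bounded-width DP is the crux. For the triangle constraint I would use the standard fact that every triangle, being a clique, is contained in some bag; attributing each triangle to the topmost bag containing all three of its vertices, I maintain for every solution vertex $v \in S \cap X_t$ a capped count $c_v \in \{0,1,\dots,r\}$ of the triangles of $v$ already realised inside $G[S \cap V_t]$, where $V_t$ is the union of the bags in the subtree rooted at $t$. Since all neighbours of $v$, and hence all triangles through $v$, lie in $V_t$ by the time $v$ is forgotten, I can finalise the requirement $c_v \ge r$ at the forget node of $v$.

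For the diameter-$2$ condition I would exploit the separator property: $X_t$ separates the already-forgotten vertices $V_t \setminus X_t$ from the not-yet-introduced vertices $V \setminus V_t$. The decisive observation is that if $u$ is forgotten and $v$ is introduced later, then any common neighbour $w$ witnessing $\dst_{G[S]}(u,v) \le 2$ must lie in the current bag $X_t$, since $w$ is adjacent both to a vertex below and to a vertex above the separator. Consequently, when $u \in S$ is forgotten its entire obligation towards the future is summarised by the set $A_u = N(u) \cap S \cap X_t$ of its surviving in-bag solution-neighbours: every future solution vertex (necessarily non-adjacent to $u$) must be adjacent to some vertex of $A_u$. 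Pairs $(u,v)$ in which at least one endpoint is already forgotten can, by the same argument, only be witnessed inside $V_t$; I therefore also keep a symmetric relation $R \subseteq \binom{X_t}{2}$ recording which pairs of bag vertices are already at distance at most $2$ in $G[S \cap V_t]$, so that such a pair can be verified at the moment the first of its two endpoints is forgotten.

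This yields the DP state at a node $t$: the set $Y = S \cap X_t$, the capped counts $(c_v)_{v \in Y}$, the relation $R$, and the expensive component, namely the family $\mathcal{F} \subseteq 2^{X_t}$ of the distinct sets $A_u$ contributed by the forgotten solution vertices. Two forgotten vertices with equal surviving-witness sets impose identical future obligations, so $\mathcal{F}$ may be kept as a set of subsets; its number of possible values is $2^{2^{|X_t|}} = 2^{2^{\OO(\trw)}}$, which dominates the state count and produces exactly the claimed bound. The transitions are routine: at an introduce node I branch on whether the new vertex joins $S$, and if it does I verify it meets every obligation in $\mathcal{F}$ and update $R$ and the triangle counts; at a forget node of $u \in S$ I check $c_u \ge r$ and the $R$-obligations toward the current bag, delete $u$ from every set of $\mathcal{F}$ (it can no longer witness future pairs), and insert the new obligation $A_u$. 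The stored value is the maximum size of $S \cap V_t$ consistent with the state, and the answer is ``yes'' iff some feasible state at the empty-bag root has value at least $\ell$.

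The main obstacle, and the only genuinely delicate point, is proving that the family $\mathcal{F}$ of bag-subsets is a sound and complete summary of the global diameter-$2$ constraint: soundness requires showing that enforcing adjacency to each surviving witness set exactly captures all cross-separator distance-$2$ demands, including the shrinking of witness sets as bag vertices are forgotten and the special case $\emptyset \in \mathcal{F}$, which correctly forbids any further solution vertex; completeness requires that any optimal solution induces a consistent sequence of such states. Once this equivalence is established, correctness follows by a standard bottom-up induction, and the running time $2^{2^{\OO(\trw)}} n^{\OO(1)}$ is immediate from the bound on the number of states times the number of nodes (with the factor $(r+1)^{\OO(\trw)}$ from the triangle counts absorbed for fixed $r$).
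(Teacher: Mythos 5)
Your DP state coincides with the paper's (Definition~\ref{defn:tw-dp-tuple-candidate-soln}): the bag-solution set $A$, capped triangle counters $f \colon A \to \{0,\dots,r\}$, and the family of sets $N(u) \cap S \cap X_t$ over already-forgotten $u$ --- your $\mathcal{F}$ is the paper's $\cA$. Your extra relation $R$ is redundant: a pair $x,y$ of bag vertices is at distance at most $2$ in $G[S \cap V_t]$ iff $xy \in E(G)$, or they have a common neighbour in the current bag-solution set, or $\{x,y\} \subseteq B$ for some $B \in \mathcal{F}$, which is exactly how the paper recovers this information at forget time. Your introduce and forget transitions match the paper's introduce-$v$- and forget-$v$-compatibility conditions, your empty-root endgame is a legitimate alternative to the paper's device of guessing (via Proposition~\ref{prop:subset-tree-decomposition-property}) a bag containing $N[w] \cap S$ and rooting there, and absorbing $(r+1)^{\OO({\trw})}$ for fixed $r$ is fine for the statement as phrased (the paper additionally observes that degeneracy at most $2{\trw}+1$ forces $r = \OO({\trw}^2)$ for any nonempty solution, making the bound uniform in $r$).

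The genuine gap is the join node, which your proposal never describes and which is not ``routine'' for precisely the constraint you identify as the crux. Your completeness argument for the diameter condition rests on the claim that every solution pair is verified either when its later member is introduced against $\mathcal{F}$ or when its first member is forgotten against $R$. In a tree decomposition with join nodes this is false: if $u$ is forgotten in the left subtree and $v$ in the right subtree of a join node $t$, then neither forget node ever sees the other vertex, no subsequent introduce node checks the pair, and $R$ cannot record it since neither vertex lies in any bag above its forget node. Because $V_{t_1} \cap V_{t_2} = X_t$ and $u,v$ cannot be adjacent, any witness for this pair must lie in $X_t \cap S$; equivalently, the witness sets of $u$ and $v$ must intersect at the moment of the join. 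The paper enforces exactly this through its join-compatibility condition (Definition~\ref{defn:join-A-compatible}): $\cA = \cA_1 \cup \cA_2$, and any two \emph{disjoint} sets $P,Q \in \cA$ must both originate from the same child. Merging $\mathcal{F} = \mathcal{F}_1 \cup \mathcal{F}_2$ without this cross-intersection requirement makes your DP unsound --- it accepts vertex sets of diameter greater than $2$. A second, smaller join-specific omission: the counters must be combined as $f(u) = \min\{r,\, f_1(u) + f_2(u) - |E(N(u) \cap A)|\}$ so that triangles lying entirely inside the bag are not counted twice. Apart from the join node, your plan is the paper's proof.
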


As a consequence, the {\trangle} {\twoclub} is FPT when the considered parameter is vertex cover number (${\vc}$), feedback vertex number (${\fvs}$) etc.
Our next considered parameter is the feedback edge number (${\fes}$) and we prove the following result.
%With respect to the feedback edge number (${\fes}$), we prove  following result that is stronger and holds true for {\trangle} {\sclub}.

\begin{restatable}{theorem}{ThmFESTriangle}
\label{thm:fes-kernel-vertex-triangle-result}
For every $r \geq 1$, and for every $s$, {\trangle} {\sclub} parameterized by ${\fes}$ admits a kernel with $3{\fes}$ vertices and $4{\fes} - 1$ edges.
\end{restatable}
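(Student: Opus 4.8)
The plan is to establish the kernel through a single global reduction rule: delete every vertex of $G$ that lies in no triangle of $G$. Because the {\vtrangle} with $r \geq 1$ forces every vertex of a solution $S$ to lie in at least one triangle of $G[S]$, and every triangle of $G[S]$ is also a triangle of $G$, a vertex that is triangle-free in $G$ can never belong to a solution. First I would verify that this rule is safe: writing $R$ for the set of vertices lying in some triangle of $G$ and $G' = G[R]$, we have $G'[S] = G[S]$ for every $S \subseteq R$, so both the $s$-club condition (diameter of $G[S]$) and the triangle condition are preserved in both directions, giving an equivalent instance. I would also note the rule is \emph{stable}, so no iteration is needed: if $v$ lies in a triangle $\{v,u,w\}$ of $G$, then $u$ and $w$ lie in that same triangle and are therefore retained, so every retained vertex still lies in a triangle of $G'$.

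The heart of the argument is bounding $|R|$ by $3{\fes}$. Let $F$ be a minimum feedback edge set, $|F| = {\fes}$, computable in polynomial time by taking a spanning forest and collecting the non-forest edges, and let $T = G - F$, which is a forest. The crucial structural fact is that every triangle of $G$ contains an edge of $F$: a triangle is a $3$-cycle, and since $T$ is acyclic its three edges cannot all lie in $T$, so at least one lies in $F$. I would then split $R$ into $A = R \cap V(F)$, the retained endpoints of feedback edges, and $B = R \setminus V(F)$. Clearly $|A| \leq |V(F)| \leq 2{\fes}$. For $v \in B$, fix a triangle $\{v,u,w\}$ of $G$ containing $v$; its feedback edge cannot be incident to $v$ (as $v \notin V(F)$), so it must be $\{u,w\}$, and both $\{v,u\}$ and $\{v,w\}$ are forest edges, making $v$ a common $T$-neighbour of $u$ and $w$. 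The main obstacle, though it turns out to be short, is to show this charging is injective: mapping each $v \in B$ to such a feedback edge $f_v = \{u,w\}$, if two distinct $v_1, v_2 \in B$ were charged to the same $f = \{u,w\}$, then $u - v_1 - w$ and $u - v_2 - w$ would be two distinct $u$–$w$ paths in $T$, forming a cycle in the forest, a contradiction. Hence $|B| \leq |F| = {\fes}$, and $|R| \leq 2{\fes} + {\fes} = 3{\fes}$.

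Finally I would bound the edges of $G' = G[R]$. At most ${\fes}$ of them are feedback edges, and the remaining edges lie in the forest $T[R]$, which has at most $|R| - 1 \leq 3{\fes} - 1$ edges. Summing gives at most ${\fes} + (3{\fes} - 1) = 4{\fes} - 1$ edges, matching the claimed bound. Since detecting the triangle-free vertices and computing $F$ are both polynomial, the entire construction runs in polynomial time, and the argument is uniform in $s$ and in $r \geq 1$, as neither the reduction rule nor the counting depends on their specific values.
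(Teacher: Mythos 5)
Your proof is correct, and at its combinatorial core it is the same argument as the paper's: both bound the kernel by $V(F)$ (at most $2{\fes}$ vertices) plus at most one additional triangle-vertex per feedback edge, and both obtain the edge bound by adding the at most ${\fes}$ feedback edges to the at most $3{\fes}-1$ edges of the residual forest. Where you genuinely diverge is in the packaging. The paper first splits into the cases $r > {\fes}$ (where item-(\ref{fes-triangle-s-club-relation}) of Lemma \ref{lemma:fes-vertex-triangle-part} confines any solution to $D = V(F)$, yielding a $2{\fes}$-vertex kernel) and $r \leq {\fes}$ (where it runs an inductive construction of subgraphs $T_0, T_1, \ldots, T_k$ and sets $Y_0 \subseteq \cdots \subseteq Y_k$, adding one feedback edge at a time together with at most one ``satisfied'' vertex); your single reduction rule plus a direct injective charging handles both regimes uniformly and replaces the induction outright. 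Your two inline facts are precisely the paper's Lemma \ref{lemma:spanning-feedback-edge-triangle-constraint} (the feedback edge of a triangle through a vertex $v \notin V(F)$ must be the edge opposite $v$, since no edge incident to $v$ lies in $F$) and Observation \ref{obs:feedback-edge-satisfies-unique-vertex} (two distinct vertices charged to the same feedback edge $\{u,w\}$ would close a $4$-cycle on forest edges), so the counting is identical. What your route buys is brevity, uniformity in $r$, and the pleasant observation that the triangle-deletion rule is stable so a single pass suffices (the paper's phrasing ``exhaustively deleted'' suggests iteration that is in fact unnecessary); what the paper's case split buys is a slightly smaller kernel ($2{\fes}$ vertices and $3{\fes}-1$ edges) in the regime $r > {\fes}$, which your uniform bound does not recover but which the theorem statement does not demand.
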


%If the maximum degree of the input graph is considered as the parameter, then {\twoclub} is FPT \cite{SchaferThesis09,SchaferKMN12}.
As the max-leaf number is upper bounded by $\OO({\fes}^2)$ \cite{Eppstein2015Metric}, our above result implies a linear sized kernel when parameterized by the max-leaf number.
It can be observed that {\trangle} {\twoclub} is FPT when the maximum degree (${\mdeg}$) of the input graph is considered as parameter.
%Additionally, the max-leaf number of a graph is at least as large as the ${\mdeg}$. 
Additionally, it holds that ${\mdeg} \leq 2{\bnw}$.
% and max-leaf number is at most twice the bandwidth of a graph.
In particular, both bandwidth (${\bnw}$) and the maximum degree (${\mdeg}$) of a graph are compositional.
Hence, {\trangle} {\twoclub} is FPT but does not admit any polynomial kernel unless {\nka} when parameterized by the bandwidth (respectively, maximum degree) of the input graph.
%We give a proof of these two results in Section \ref{sec:prelims} for the sake of completeness.
%\todo[inline]{Diptapriyo: should we add these two proofs?}
%We observe that when ${\bnw}$ and ${\mdeg}$ are considered as parameters, these 
%Additionally, 
%%%
So, our next considered parameter is the $h$-index (${\hind}$) that is provably smaller than the maximum degree (${\mdeg}$) of the input graph.
%It was proved in \cite{HartungKNS15} that {\twoclub} parameterized by the {\hind} is W[1]-hard but admits an XP algorithm.
%We answer Question-(\ref{XP-translation}) affirmatively and  prove the following result in Section \ref{sec:other-parameters}.

\begin{restatable}{theorem}{ThmHIndex}
\label{thm:h-index-XP-algorithm}
For every $r \geq 1$, {\trangle} {\twoclub}  
%\begin{enumerate}
	%\item 
	can be solved in $2^{\OO(k^4)}n^{2^{\OO(k)}}$-time when $k$ is the $h$-index of the input graph.
	% and
	%\item can be solved in polynomial-time in graphs belonging to class ${\Bip} + 1{\sf v}$.
%\end{enumerate} 
\end{restatable}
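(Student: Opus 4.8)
The plan is to exploit that an ${\hind}$ of $k$ forces almost all vertices to have small degree. Let $H \subseteq V(G)$ be the set of vertices of degree at least $k+1$; by definition of the $h$-index we have $|H| \le k$, while every vertex of $L := V(G) \setminus H$ has degree at most $k$. Any solution $S$ decomposes as $S = S_H \cup S_L$ (disjoint union) with $S_H = S \cap H$ and $S_L = S \cap L$, where $|S_H| \le k$. The first step is to branch over all $2^{|H|} \le 2^{k}$ choices of $S_H$; from here on $S_H$ is fixed, and the task becomes choosing $S_L \subseteq L$ of maximum cardinality so that $S_H \cup S_L$ is a $2$-club in which every vertex lies in at least $r$ triangles.

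The central structural notion is the \emph{type} of a low-degree vertex $v \in L$, namely $\tau(v) := N(v) \cap S_H \subseteq S_H$. Since $|S_H| \le k$ there are at most $2^{k}$ distinct types, which is where the $2^{\OO(k)}$ in the exponent of $n$ ultimately originates. The key point is that the diameter-$2$ requirement splits into conditions that are \emph{determined at the level of types} and conditions that are genuinely \emph{vertex-specific}. Indeed, a non-edge $\{h,v\}$ with $h \in S_H$ and $v \in L$, $h \notin \tau(v)$, is discharged for free precisely when $\tau(v)$ meets $N(h) \cap S_H$, which depends only on $\tau(v)$ and $h$; likewise two low-degree vertices $u,v$ with $\tau(u) \cap \tau(v) \ne \emptyset$ automatically share a neighbour in $S_H$ and are at distance at most $2$. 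The conditions that remain are therefore: pairs inside $S_H$ (at most $\binom{k}{2}$ of them), pairs $\{h,v\}$ as above that are \emph{not} free through $S_H$, and pairs of low-degree vertices with \emph{disjoint} types. In each of these remaining cases a non-adjacent pair must exhibit a common neighbour in $S$, and — crucially — that witness must be a low-degree vertex adjacent to an endpoint, since any high-degree witness in $S_H$ would have to lie in both endpoints' types.

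I would then guess a bounded \emph{witness/representative set} $R \subseteq L$: the $\OO(k^2)$ low-degree connectors certifying the distance-$2$ conditions inside $S_H$, together with one representative low-degree vertex for each of the at most $2^{k}$ occurring types. This amounts to $2^{\OO(k)}$ guessed vertices, enumerated by trying all $\binom{n}{2^{\OO(k)}} = n^{2^{\OO(k)}}$ candidate sets. Fixing the \emph{core} $C := S_H \cup R$, the remaining low-degree vertices need only be added without destroying what $C$ already guarantees. Here the high-degree triangle constraints are benign, because a vertex $h \in S_H$ can only accumulate more triangles as further low-degree vertices are added, so its requirement is monotone and is verified at the end. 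The binding conditions are local to each candidate $v \notin C$: that $v$ sits at distance at most $2$ from every core vertex, and that $v$ can collect $r$ triangles among its at most $k$ neighbours that end up in $S$. Since such a $v$ has at most $\binom{k}{2}$ incident neighbour-pairs, tracking, over the core, which local triangle-configurations are realised and which witnesses each vertex supplies is the source of the $2^{\OO(k^4)}$ factor; the residual choice of $S_L \setminus C$, coupled only through disjoint-type diameter conditions and mutual triangle support, is then modelled as a bounded constraint problem over the $2^{\OO(k)}$ types and solved within the stated budget, after which I check $|S| \ge \ell$.

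The hardest part will be two-fold. First, I must prove that $2^{\OO(k)}$ representatives genuinely suffice, i.e.\ that once a representative of each type and the $S_H$-connectors are fixed, \emph{every} vertex of an occurring type can be certified to meet all its vertex-specific distance-$2$ obligations; this is delicate precisely because two low-degree vertices of the same type can have entirely different low-degree neighbourhoods and hence different available witnesses. Second, I must decouple the opposing pressures of the triangle and diameter constraints: including more low-degree vertices can only \emph{help} a vertex reach its $r$ triangles but can \emph{create} a distant pair of disjoint-type vertices and thereby violate diameter $2$. I expect the crux of the proof to be showing that, conditioned on the guessed core, these residual inclusion decisions separate across types and can be resolved greedily or by a search whose cost stays within $2^{\OO(k^4)}$, so that no blow-up beyond $2^{\OO(k^4)} n^{2^{\OO(k)}}$ is incurred.
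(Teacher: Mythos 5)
Your setup matches the paper's: guess the intersection $Y = S_H$ of the solution with the at most $k$ high-degree vertices ($2^k$ branches), partition the low-degree vertices into at most $2^{k}$ types (twin classes with respect to $Y$), and observe that only pairs of vertices whose types have no common neighbour in $Y$ (the paper calls such classes in $Y$-conflict) genuinely threaten the diameter-$2$ condition; this is where the $n^{2^{\OO(k)}}$ enumeration originates in both arguments. But there is a genuine gap at precisely the point you flag as the ``crux.'' Guessing one representative per type plus $\OO(k^2)$ connectors does not suffice: the distance-$2$ obligations of the remaining vertices of a conflicted type are vertex-specific (they depend on low-degree neighbourhoods, not on types, as you yourself note), a single conflicted type can contain $\Omega(n)$ vertices, and you give no mechanism --- only an unspecified ``bounded constraint problem over the $2^{\OO(k)}$ types'' --- for deciding which of them can be jointly included. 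Your proposed derivation of the $2^{\OO(k^4)}$ factor from tracking the $\binom{k}{2}$ neighbour-pairs of each candidate does not yield an algorithm, and you explicitly leave unproved the claim that $2^{\OO(k)}$ representatives certify all vertex-specific obligations.

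The missing idea, which is the heart of the paper's proof, is a locality bound. The paper guesses, for each type class, either one chosen vertex or ``none'' (about $(n+1)^{2^{\OO(k)}}$ branches in total). Once a representative $v_i$ of a class $T_i$ that is in $Y$-conflict with some chosen class $T_j$ is fixed, the chosen $v_j$ must lie within distance $2$ of $v_i$ in $G-X$ (the common-neighbour witness cannot lie in $Y$), and any further solution vertex $u \in T_i$ must reach $v_j$ through its own low-degree witness, hence lies within distance $4$ of $v_i$ in $G - X$. Since $G - X$ has maximum degree at most $k$, this ball $B_i$ has only $\OO(k^4)$ vertices, so one can brute-force all subsets $A_i \subseteq B_i$ --- this is the true source of the $2^{\OO(k^4)}$ term. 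The second tension you identify (triangle constraints reward inclusion, diameter punishes it) is resolved in the paper by the class split: for classes not in $Y$-conflict with any other chosen class, every pair of candidate vertices is at distance at most $2$ through a witness in $Y$, which is never deleted, so diameter there is insensitive to inclusion decisions; one therefore starts from the union $W$ of all such classes and greedily deletes any vertex with fewer than $r$ triangles in $G[Y \cup Z_1 \cup W]$ until stable, which is safe because a deleted vertex can belong to no solution extending the guess and deletions preserve the $2$-club property. Without the radius-$4$, $\OO(k^4)$-ball argument for conflicted classes and this monotone deletion step for the rest, your outline cannot be completed within the stated time bound.
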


\section{Preliminaries}
\label{sec:prelims}

%%%%
\subparagraph*{Sets, Numbers, and Graph Theory.} 
We use ${\bN}$ to denote the set of all natural numbers. 
Given $r \in {\bN}$, $[r]$ denotes the set $\{1,2,\ldots,r\}$.
Throughout this paper, we consider undirected graphs and use the standard notations from Diestel's book \cite{DiestelBook}.
Given a graph $G = (V, E)$, we denote $n = |V|$ and $m = |E|$.
For a subset $S \subseteq V(G)$, let $G[S]$ denote the subgraph induced by $S$.
Given $x, y \in V(G)$, the {\em distance} between $x$ and $y$ in $G$ is defined as the number of edges in a shortest path between $x$ and $y$ in $G$.
%When the graph is clear from the context, we omit the subscript.
The {\em diameter} of $G$ is the maximum distance taken over all pairs of vertices in $G$.
% $\max_{x, y \in V(G)} \{{\dst}_G(x, y)\}$.
A set of edges $F \subseteq E(G)$ is a {\em feedback edge set} of $G$ if $G - F$ is a forest.
Two vertices $u, v \in V(G)$ are {\em twins} if $N_G(u) \setminus \{v\} = N_G(v) \setminus \{u\}$.
A graph has {\em $h$-index} $k$ if $k$ is the largest number such that the graph has at least $k$ vertices each having degree at least $k$.
For a graph $G$, we use ${\hind}(G)$ to denote the $h$-index of $G$.
A graph $G = (V, E)$ is {\em bipartite} if $V(G)$ can be partitioned into two parts $A$ and $B$ for every edge $uv \in E(G)$, $u \in A$ if and only if $v \in B$.
%The following observation of $s$-club holds true due to Hartung et al. \cite{HartungKN15}.
%
%\begin{observation}[Observation 1 of \cite{HartungKN15}]
%\label{obs:s-club-property}
%Let $S$ be an $s$-club of $G$ and $u$ and $v$ are twins.
%If $u \in S$ and $|S| > 1$, then $S \cup \{v\}$ is an $s$-club of $G$.
%\end{observation}

\begin{definition}[Tree Decomposition and Treewidth]
\label{defn:treewidth}
Given an undirected graph $G = (V, E)$, a {\em tree decomposition} of $G$ is a pair $\cT = (T, \XX_{t \in V(T)})$ where $T$ is a tree whose every node $t$ is assigned a vertex subset $X_t \subseteq V(G)$, called a {\em bag} such that the followings are satisfied:
\begin{enumerate}[(i)]
	\item 
	%{\bf (i)} 
	$\bigcup_{t \in V(T)} X_t = V(G)$,
	\item 
	%{\bf (ii)} 
	for every edge $uv \in E(G)$, there exists $t \in V(T)$ such that $u, v \in X_t$, and 
	\item 
	%{\bf (iii)} 
	for every $u \in V(G)$, $T_u = \{t \in V(T) \mid u \in X_t\}$ 
	%i.e. the set of nodes whose corresponding bags contain $u$
	 induces a connected subgraph of $T$.
\end{enumerate}
The {\em width} of a tree decomposition $\cT = (T, \XX_{t \in V(T)})$ is $\max_{t \in V(T)} |X_t| - 1$.
% i.e. the maximum size of its bags minus one.
The {\em treewidth} of $G$, denoted by ${\trw}(G)$ is the minimum possible width of a tree decomposition of $G$.
\end{definition}

%We refer to the appendix (also \cite{CyganFKLMPPS15}) for definition of nice tree decomposition which is helpful for dynamic programming.
%\iffalse
It has been observed that a tree decomposition with special properties is more helpful for dynamic programming. 
%This is a special type of tree decompositions that we define as follows.
%%
%\begin{definition}[Nice Tree Decomposition]
%\label{defn:nice-tree-decomposition}
Given a graph $G = (V, E)$, a tree decomposition $\cT = (T, \XX_{t \in V(T)})$ is {\em nice} if $T$ is a binary tree rooted at a node $r \in V(T)$ and the following conditions are satisfied:
\begin{itemize}
	\item $X_r = \emptyset$ and if $t \in V(T)$ is a leaf node, then $X_t = \emptyset$. 
	%In other words, the root node and every leaf node is empty bag.
	\item Every non-leaf node $t \in V(T)$ can be of the following three types:
	\begin{itemize}
		\item {\bf Introduce Node:} $t$ has a unique child $t'$ and there is $u \in X_t \setminus X_{t'}$ such that $X_t = X_{t'} \cup \{u\}$; we say that $u$ is {\em introduced} in $t$.
		\item {\bf Forget Node:} $t$ has a unique child $t'$ and there is $u \in X_{t'} \setminus X_t$ such that $X_t = X_{t'} \setminus \{u\}$; we say that $u$ is {\em forgotten} in $t$.
		\item {\bf Join Node:} $t$ has two children $t_1$ and $t_2$ such that $X_t = X_{t_1} = X_{t_2}$.
	\end{itemize} 
\end{itemize}
%\fi
%%It is well-known due to \cite{CyganFKLMPPS15,DowneyF13} that an arbitrary tree decomposition can be converted into nice tree decomposition of the same width in polynomial-time.

%Due to lack of space, we refer to Section \ref{sec:FPT-prelims} in appendix and to \cite{CyganFKLMPPS15,DowneyF13,FlumG06} for more formal definitions on parameterized complexity and kernelization.

%\begin{}
%\label{prop:nice-tree-decomposition}
%If $G$ is a graph of $n$ vertices, $m$ edges, and a tree decomposition $\cT = (T, \XX_{t \in V(T)})$ of $G$ of width at most $k$, then there exists an algorithm that runs in $\OO(k^2 \cdot \max(n,|V(T)|))$-time and computes a nice tree decomposition of $G$ of width at most $k$ having $\OO(nk)$ nodes. 
%\end{proposition} 
%\end{definition}

%%%%FPT preliminaries
\subparagraph*{Parameterized Complexity and Kernelization:} 
A {\em parameterized problem} $L$ is a subset of $\Sigma^* \times \bN$ where $\Sigma$ is a finite alphabet.
The input instance to a parameterized problem $L$ is denoted by $(x, k)$ where $x \in \Sigma^*$ and $k \in \bN$ is the {\em parameter}.
A parameterized problem $L$ is {\em fixed-parameter tractable} if there is an algorithm $\cA$ and a computable function $f: \bN \rightarrow \bN$ that given $(x, k) \in \Sigma^* \times \bN$, runs in $f(k)\cdot |x|^{\OO(1)}$-time and correctly decides whether $(x, k) \in L$.
The algorithm $\cA$ in the above definition is known as {\em fixed-parameter algorithm} (or {\em FPT algorithm}).
Design of preprocessing rules is an important steps to design efficient parameterized algorithms.
A parameterized problem $L \subseteq \Sigma^* \times \bN$ admits a {\em kernelization} if there exists a polynomial-time algorithm $\cB$ that given an input instance $(x, k)$ of $L$, outputs an equivalent instance $(x',k')$ such that $|x'| + k' \leq g(k)$ for some computable function $g: \bN \rightarrow \bN$ and the output instance $(x',k')$ is known as the {\em kernel} for $L$. 
If $g(\cdot)$ is a polynomial function, then $L$ admits a {\em polynomial kernel}.

We use the following conjecture.

\begin{conjecture}[Exponential Time Hypothesis (ETH)]
\label{thm:eth}(\cite{CyganFKLMPPS15})
{\sc $3$-CNF-SAT} cannot be solved in $\OO^*(2^{o(n)})$ time where the input formula has $n$ variables and $m$ clauses.
\end{conjecture}

\subparagraph*{Some Simple Results.} 
Before proving our main results, we first provide some intriguing results on the algorithmic complexity of {\trangle} {\sclub}. 
Hartung et al. (Theorem 3 from \cite{HartungKN15}) proved that {\twoclub} is NP-complete if the input graph has one vertex deletion of which results in a bipartite graph.
%para-NP-hard when parameterized by the distance to bipartite graph.
%In particular, their result 
%(Theorem 3 of \cite{HartungKNS15})
% proved that {\twoclub} is NP-complete if the input graph has one vertex deletion of which results in a bipartite graph.
%But it is an open question if {\trangle} {\twoclub} is FPT parameterized by distance to bipartite graph.
We complement this result by proving the following result for {\rVT} {\sclub}.

\begin{proposition}
\label{lemma:bipartite-algorithm}
For every $r \geq 1$, {\rVT} {\sclub} can be solved in polynomial-time if the input graph has one vertex whose deletion results in a bipartite graph.
\end{proposition}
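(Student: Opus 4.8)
The plan is to exploit the fact that a bipartite graph is triangle-free. Let $w$ be a vertex with $G - w$ bipartite; such a $w$ can be found in polynomial time by testing each vertex. Since $G - w$ contains no odd cycle it contains no triangle, so \emph{every} triangle of $G$ uses $w$. Now let $S$ be any solution, i.e.\ an $s$-club with $|S| \geq \ell$ satisfying the {\vtrangle}. As $r \geq 1$, every vertex of $S$ lies in at least one triangle of $G[S]$, and every such triangle contains $w$; hence $w \in S$, and moreover every $u \in S \setminus \{w\}$ is adjacent to $w$, so that $S \setminus \{w\} \subseteq N_G(w)$ with $w$ adjacent to all of $S$.

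The first consequence concerns the diameter. Because $w$ is adjacent to every other vertex of $S$, any two vertices of $S$ are at distance at most $2$ in $G[S]$, so $G[S]$ has diameter at most $2$. Thus for every $s \geq 2$ the $s$-club requirement is automatically met and the only genuine constraint is the triangle property. The remaining case $s = 1$, where $S$ must be a clique, is handled separately: since $G[N_G(w)]$ is bipartite and hence triangle-free, $S \setminus \{w\}$ is a clique in a bipartite graph and so has at most two vertices, whence $|S| \leq 3$; such solutions are found by brute force over all subsets of size at most three.

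The second consequence reduces the triangle property to a minimum-degree condition. Writing $T = S \setminus \{w\} \subseteq N_G(w)$, every triangle of $G[S]$ through a vertex $u \in T$ has the form $\{u, w, x\}$ with $x \in T$ adjacent to $u$; hence $u$ lies in exactly $|N_G(u) \cap T|$ triangles, so the {\vtrangle} for $u$ says precisely that $u$ has at least $r$ neighbours inside $T$, i.e.\ $G[T]$ has minimum degree at least $r$. The constraint for $w$ itself is that the number of triangles through $w$, which equals the number of edges of $G[T]$, is at least $r$; but if $G[T]$ is nonempty with minimum degree at least $r$ then $|T| \geq r+1$ and $G[T]$ has at least $\tfrac{r(r+1)}{2} \geq r$ edges, so this is automatic. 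Therefore the feasible sets $T$ are exactly the induced subgraphs of the bipartite graph $G[N_G(w)]$ of minimum degree at least $r$.

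It remains to find the largest such $T$, which is exactly the $r$-core of $G[N_G(w)]$: the unique maximum induced subgraph of minimum degree at least $r$, obtained by repeatedly deleting vertices of degree less than $r$. Every feasible $T$ survives this peeling and is thus contained in the $r$-core, so the maximum solution has size $|{r\text{-core}}| + 1$, and we answer \textsc{yes} iff this is at least $\ell$ (with $s = 1$ handled as above). The whole procedure is clearly polynomial. The only nontrivial step is recognising that, once bipartiteness forces all triangles through $w$, the {\vtrangle} collapses to the polynomial-time-solvable $r$-core problem; the one point that must actually be checked is that $w$'s own triangle requirement comes for free, which is the mild counting argument above.
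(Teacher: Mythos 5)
Your proof is correct and follows essentially the same approach as the paper: all triangles pass through the special vertex $w$, forcing $w \in S$ and $S \setminus \{w\} \subseteq N_G(w)$, after which the triangle constraint reduces to iteratively peeling low-degree vertices (your $r$-core of $G[N_G(w)]$ is exactly the paper's repeated deletion of vertices in fewer than $r$ triangles from $D \cup \{x\}$). If anything, your write-up is slightly more careful than the paper's, since you explicitly verify that $w$'s own triangle requirement comes for free and you handle the degenerate case $s=1$ separately.
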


\begin{proof}
Let $G = (V, E)$ be an undirected graph and $x \in V(G)$ such that $G - \{x\}$ is a bipartite graph.
Then, there is a vertex bipartition $(A, B)$ of $G - \{x\}$ such that $A \cap B = \emptyset$, and for every edge $uv \in E(G - \{x\})$, $u \in A$ if and only if $v \in B$.
What is crucial here is that any solution $S$ must contain a triangle.
Since every triangle of $G$ must contain $x$, it must be that $x$ is part of the potential solution $S$.
Hence, for every $u \in A \cup B$, if $u$ is part of the potential solution $S$, then $u$ must be adjacent to $v \in A \cup B$ and $x$ must be adjacent to both $u$ and $v$.
Let $D$ denote the set of vertices $u$ from $N_G(x)$ such that $u$ has a neighbor $v$ in $N_G(x)$.
We consider the set $D \cup \{x\}$.
Observe that between $x$ is adjacent to every vertex of $D$, and between every $u, v \in D$, $x$ is a common neighbor.
Hence, the distance between every pair of vertices is at most 2.
Now, we check the graph $G[D \cup \{x\}]$ and identify which vertex is part of less than $r$ triangles in $G[D \cup \{x\}]$.
If there is any such vertex, then we delete that vertex.
We continue this procedure until we find that every vertex of $D$ is part of at least $r$ triangles in $G[D \cup \{x\}]$.
If $|D \cup \{x\}]| \geq \ell$, then we output a yes-instance, and otherwise we output a no-instance.
\end{proof}

%\begin{proof}
%\end{proof}

%%%%VC-Kernel Lower Bound
%\todo[inline]{Diptapriyo: we can drop it from the main text.}
Our above result says that not every hardness result for {\twoclub} transfers to a hardness result of {\trangle} {\twoclub}.
It follows from Theorem \ref{thm:triangle-tw-dp} that {\vtrangle} {\twoclub} is FPT when parameterized by the vertex cover number (${\vc}$).
%\todo[inline]{Diptapriyo: we may want to omit this in short version}
%For the sake of completeness, we prove that {\trangle} {\sclub} is FPT when parameterized by the 
For the sake of completeness, we provide a proof of nonexistence of polynomial kernels for {\vonetriangle} {\twoclub} parameterized by ${\vc}$ unless {\nka}.
We use the same construction as in \cite{HartungKN15} to prove our result.
%For other fixed constant values of $r$, the proof will have similar flavor.

\begin{proposition}
\label{prop:vc-lower-bound}
%For every $r \geq 1$ and every $s$, {\trangle} {\sclub} parameterized by ${\vc}$ admits a $2^{2^{\OO({\vc})}}n^{\OO(1)}$-time algorithm.
{\vonetriangle} {\twoclub} parameterized by ${\vc}$ does not admit a polynomial kernel unless {\nka}.
%	\item 
%\end{enumerate} 
\end{proposition}

\begin{proof}
We provide a polynomial parameter transformation from {\clique} parameterized by the vertex cover number (${\vc}$) of the input graph.
Consider $(G, X, \ell)$ the input instance of {\clique} such that $X$ is a vertex cover of $G$.
The question asked is whether $G$ has a clique with at least $\ell$ vertices or not.
Without loss of generality, we assume that $\ell > 3$, as otherwise {\clique} is polynomial-time solvable.

\noindent
{\bf Construction:} Our construction works the same as Hartung et al. \cite{HartungKN15}.
Given a graph $G$ and a vertex cover $X$, let $X = \{v_1,v_2,\ldots,v_k\}$.
We construct a graph $G' = (V', E')$ as follows.
For each $v_i \in X$, create a set of $n$ vertices $Y_i = \{v_i^1,\ldots,v_i^n\}$.
Let $Y = \cup_{i=1}^k Y_i$.
Next, for every edge $e = v_i v_j \in E(G[X])$, we create a vertex $e_{i,j}$ and add to a set $V_E$.
Make $e_{i,j}$ adjacent to every vertex of $Y_i \cup Y_j$ and make $V_E$ into a clique.
In simple terms, the edges between $Y$ and $V_E$ are the vertex-edge incidence graph of $G[X]$.
Subsequently, add the vertices of $Z = V(G) \setminus X$ into $V'$.
For every $u \in Z$, $u$ is adjacent to $e_{i, j}$ if and only if $uv_i, uv_j \in E(G)$.
This completes the construction of $G'$ and we set $\ell' = (\ell - 1)n + |V_E| + 1$.

Observe that $V_E$ is a vertex cover of $G'$ and $|V_E| \leq {{|X|}\choose{2}}$.
We claim that $(G, X, \ell)$ is a yes-instance if and only if $(G', V_E, \ell')$ is a yes-instance for {\vonetriangle} {\twoclub}.

($\Rightarrow$) For the forward direction, let $S$ be a clique with at least $\ell$ vertices.
There are two cases.
If $S \subseteq X$, then we set $T = (\cup_{v_i \in S} Y_i) \cup V_E$.
Clearly, $|T| = |V_E| + |S|n = |V_E| + n\ell > |V_E| + (\ell - 1)n + 1$.
We can use arguments similar to \cite{HartungKN15} to prove that $T$ is a 2-club.
It remains to justify that $T$ is a 2-club of $G'$ satisfying vertex 1-triangle property.
We exploit the fact that $\ell > 4$ and $r = 1$.
Consider $e_{i,j} \in V_E$.
Since $|S| \geq 4$, it must be that $S \cap X$ has at least 4 vertices.
Then, $V_E$ must have at least 3 vertices, hence $e_{i,j}$ is part of at least one triangle.
Now, consider $v_i^j \in Y$.
Note that $v_i \in S$.
Now, $v_i$ is adjacent to 3 vertices in $S$ and hence $v_i$ is incident to 3 edges in $G[S]$.
It means that $v_i^j$ is adjacent to 3 vertices in $V_E$.
Since $V_E$ is a clique, it follows that $v_i^j$ is part of at least one triangle.
Hence, $T$ satisfies vertex 1-triangle property.

The other case occurs when $S \setminus X$ has a vertex $u$.
Observe that at most one such vertex  can exist from $S$.
Then, we set $T = (\cup_{v_i \in S} Y_i) \cup V_E \cup \{u\}$.
Since $|S \cap X| = \ell -1$, note that $|T| = (\ell - 1)n + 1 + |V_E|$.
It follows due to the arugments by Hartung et al. \cite{HartungKN15} that $T$ is a 2-club of $G'$.
We justify that $T$ satisfies vertex 1-triangle property.
As $|S \cap X| > 2$, it must be that $|V_E| \geq 3$.
Hence, for every $e_{i,j} \in V_E$ and for every $v_i^j \in Y_i$, we can prove using similar arguments as before that these vertices are part of at least 1 triangle in $G'[T]$.
Consider $u \in T$.
Note that $u$ is adjacent to 3 vertices in $X$.
Then, there are three edges $v_i v_j \in G[X]$ such that $u$ is adjacent to $e_{i,j} \in T$.
Therefore, $u$ is part of at least one triangle in $G'[T]$.
It implies that $u$ is part of at least one triangle in $G'[T]$.
This completes the proof of forward direction.

For the backward direction ($\Leftarrow$), let $T$ be a 2-club of $G'$ that satisfies vertex 1-triangle property and $|T| \geq (\ell-1)n + |V_E| + 1$.
%The first thing we have to ensure is that $T$ contains at least one vertex from at least $\ell - 1$ many distinct $Y_i$s. 

Let $T_Y = T \cap Y$. We have $|T_Y| \geq |T \setminus (V_E \cup Z) | \geq (\ell-1)n + |V_E| + 1 - (|V_E| + n-k) \geq (\ell-2)n+1$. As each group $Y_i$ is of size $n$, by pigeon-hole principle, there exist at least $\ell-1$ groups $Y_i$ such that $T_Y$ contains at least one vertex from the group. Let $I$ denote the corresponding set of indices.

For each $i, j \in I$, let $a \in Y_i \cap T$ and $b \in Y_j \cap T$. Since $T$ is a $2$-club, there is a path of length at most $2$ between $a$ and $b$ in $G'[T]$. This can only happen if there are edges from $e_{i,j} \in V_E$ to $a$ and $b$. Thus all vertices $v_i, v_j$ with $i, j \in I$ are adjacent in $G$. Thus the set $C = \{v_i: i \in I\}$ forms a clique of size at least $\ell-1$.
Note that if $T \cap Z = \emptyset$, then  $|T_Y| \geq |T \setminus V_E  | \geq (\ell-1)n + 1$, implying that there are $\ell$ such groups. In this case, $C$ forms a clique of size at least $\ell$. Else, there exist a vertex $z \in T \cap Z$. As $z$ is part of the $2$-club, for each $y \in T \cap Y_i$, there is a path of length at most $2$ between $z$ and $y$ in $G'[T]$. This can only happen if there are edges from $e_{z,i} \in V_E$ to $z$ and $a$. Thus $z$ is adjacent to all the vertices of $C$ in $G$, forming a clique of size at least $\ell$.

Observe that this is a polynomial parameter transformation from {\clique} parameterized by the vertex cover number to {\vonetriangle} {\twoclub} parameterized by the vertex cover number (quadratic blow up).
Hence, {\vonetriangle} {\twoclub} parameterized by the vertex cover number (${\vc}$) admits no polynomial kernel unless {\nka}.
\end{proof}
% Let $W$ be the set of vertices the group represents in $G$. Note that in $G'$, to satisfy $2$-club property, vertices $u \in T_Y \cap Y_i$ and $v \in T_Y \cap Y_j, i \neq j$ must have a common neighbor, implying that the corresponding edge $e_{uv} \in T$. Thus, we can every pair of vertices in $W$ share an edge in $G$. 
%\todo[inline]{Diptapriyo: please complete this proof argument.}

%\begin{proof}
%\end{proof}

\section{Parameterized by the treewidth}
\label{sec:FPT-treewidth}

%\input{tw-dp-new}
%%%new-treewidth DP

In this section, we provide our results when parameterized by the treewidth of the input graph.
%%%
%\subsection{Upper Bound}
First, we describe how to design an FPT algorithm for {\trangle} {\twoclub} when parameterized by ${\trw}$, that is, the treewidth (${\trw}$) of the graph.
We first design an algorithm for the combined parameter ${\trw} + r$.
Subsequently, we explain how that algorithm will imply an FPT algorithm when parameterized by the treewidth ($\trw$) alone.
%Given a graph $G$, we invoke an algorithm by Korhonen \cite{Korhonen21} that runs in $2^{\OO({\trw})}n^{\OO(1)}$-time and computes a tree decomposition of width $k \leq 2{\trw} + 1$.
%Then, we invoke an algorithm by Kloks \cite{Kloks94} that computes a nice tree decomposition of width $k ~(\leq 2{\trw} + 1)$ in $\OO(mk^2)$-time.

%\vspace{-2mm}

\subparagraph*{Dynamic Programming:} Let $\cT = (T, \{X_t\}_{t \in V(T)})$ be a nice tree decomposition of $G$ of minimum width ${\trw}(G)$ and root $r$.
We are ready to describe the details of the dynamic programming over $\cT$.
Let $T_t$ denote the subtree of $T$ rooted at node $t \in V(T)$ and $G_t$ denote the subgraph induced on the vertices that are present in the bags of $T_t$.
Consider a 2-club $S$ of $G$ that satisfies {\vtrangle} and a {\em partial solution} $S_t$ by restricting $S$ to the subgraph $G_t$.
Formally, $S_t = S \cap V(G_t)$.
As $G_t[S_t]$ is the same graph as $G[S_t]$, we use $G[S_t]$ for simplicity.
Before formally defining the subproblem, we first focus on the properties that we need for the vertex subset $S_t$.
Our next definition formalizes those properties.

\begin{definition}[Valid Tuple and Candidate Solution]
\label{defn:tw-dp-tuple-candidate-soln}
Consider $A \subseteq X_t$ such that $f: A \rightarrow [r] \cup \{0\}$ and $\cA \subseteq 2^A$.
Then, $\tau(A) = (f, \cA)$ is a {\em valid tuple} for $t$ if there exists $S_t \subseteq V(G_t)$ that satisfies each of the following conditions.
%\vspace{-2mm}
\begin{enumerate}[(i)]
	\item\label{prop-bag-intersection} $A = S_t \cap X_t$.
	\item\label{prop-forgotten-dist-2-part-1} For every $x \in S_t \setminus A$, and for every $y \in S_t$, the distance between $x$ and $y$ must be at most 2 in $G[S_t]$.
	\item\label{prop-number-of-triangles-intersection} For every $v \in A$, $f(v)$ is used to represent the number of triangles in $G[S_t]$ that $v$ participates in. This value can be less than $r$, as there may exist some neighbors of $v$ in $V(G) \setminus V(G_t)$ that can increase the number of triangles that $v$ is part of. We allow these values to vary between $0$ and $r$.
	%If $v$ is part of more than $r$ triangles, then we set $f(v) = r$.
	%We use a well-defined function $f: A \rightarrow [r] \cup \{0\}$.
	If $f(v) < r$, then $v$ is part of $f(v)$ triangles in $G_t[S_t]$.
	If $f(v) = r$, then $f(v)$ is part of at least $r$ triangles in $G_t[S_t]$.
	\item\label{prop-number-of-triangles-forgotten} For every $v \in S_t \setminus A$, $v$ is part of at least $r$ triangles in $G[S_t]$.
	\item\label{prop-subset-information-forgot-vertex} A set $B \subseteq A$ is in $\cA$ if and only if there exists $v \in S_t \setminus A$ such that $N_{G_t}(v) \cap A = B$.
	In simple terms, $\cA$ stores the information of the neighborhood of every vertex in $S_t \setminus A$.
	%We also need to store $B_v$ as part of our information.
%	This is needed to ensure that a future vertex $w \notin G_t$ is adjacent to at least one vertex in $B_v$ to ensure that distance is 2 between $v$ and $w$.
%	Hence, we need to store a collection of subsets of $A$, hence $\cA \subseteq 2^A$.
%	We ensure that for every $v \in S_t \setminus A$, $B_v (= N_{G_t}(v) \cap A) \in \cA$.
%	Observe that $B_v = \emptyset$ is also allowed to indicate that some vertex that is forgotten (not in $A$ but in $S_t$) is not adjacent to any vertex of $A$.
\end{enumerate}
%all conditions described in items (\ref{prop-bag-intersection}) - (\ref{prop-subset-information-forgot-vertex}).

Given $A \subseteq X_t$, a subset $S_t \subseteq V(G_t)$ that satisfies all conditions described in items (\ref{prop-bag-intersection})-(\ref{prop-subset-information-forgot-vertex}) is called a {\em candidate solution} for the node $t$ with respect to the tuple $\tau(A)$.
If a node $t$ has no candidate solution with respect to a tuple $\tau(A)$, then $\tau(A)$ is {\em invalid}.
%%%
Given $A \subseteq X_t$, if $S_t$ is a candidate solution of maximum possible cardinality for a node $t$ with respect to tuple $\tau(A)$, then we say that $R_t$ is an {\em optimal} candidate solution for node $t$ with respect to the tuple $\tau(A)$.
\end{definition}

We define a subproblem ${\dptw}[t, A, \tau(A)]$ that stores the size of a maximum cardinality subset $S_t \subseteq V(G_t)$ that satisfies all the conditions described in items (\ref{prop-bag-intersection})-(\ref{prop-subset-information-forgot-vertex}).
Given $A \subseteq X_t$, if $\tau(A)$ is invalid, then we denote ${\dptw}[t, A, \tau(A)] = -\infty$.

\begin{lemma}
\label{lemma:treewidh-DP-table}
Given a nice tree decomposition $\cT = (T, \{X_t\}_{t \in V(T)})$ of $G$ of width $k$, there exists an algorithm that correctly compute the values of all subproblems in $r^{\OO({k})}2^{2^{\OO({k})}}n^{\OO(1)}$-time. 
\end{lemma}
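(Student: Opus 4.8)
The plan is to perform a standard bottom-up dynamic programming over the nice tree decomposition $\cT$, computing the table entries $\dptw[t, A, \tau(A)]$ for every node $t$, every subset $A \subseteq X_t$, and every valid tuple $\tau(A) = (f, \cA)$ in a single pass from the leaves toward the root. For a fixed node $t$, the number of relevant tuples is bounded: there are at most $2^{k+1}$ choices for $A$, at most $(r+1)^{|A|} \le (r+1)^{k+1}$ choices for the function $f \colon A \to [r] \cup \{0\}$, and at most $2^{2^{|A|}} \le 2^{2^{k+1}}$ choices for the family $\cA \subseteq 2^A$. Multiplying these gives the claimed $r^{\OO(k)} 2^{2^{\OO(k)}}$ factor per node, and since the number of nodes in a nice tree decomposition is $n^{\OO(1)}$, the total count of table entries matches the target running time provided each entry is computed in time polynomial in its dependencies.

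The core work is to give the recurrences for each of the four node types and argue their correctness. For a \textbf{leaf node}, $X_t = \emptyset$, so the only tuple is $(\emptyset, \emptyset)$ with value $0$. For an \textbf{introduce node} introducing $u$, I would set $\dptw[t, A, \tau(A)]$ by cases on whether $u \in A$: if $u \notin A$, the value is inherited from the child with the same tuple; if $u \in A$, I must verify consistency, recomputing $f(u)$ as the number of triangles through $u$ using only edges inside $X_t$ (since $u$ has no neighbors in $G_t$ outside $X_t$ by the properties of tree decompositions), checking that the recorded $f$-values of other vertices in $A$ and the neighborhood family $\cA$ remain consistent after adding $u$, and incrementing the stored size by $1$. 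For a \textbf{forget node} forgetting $u$, I would take a maximum over child tuples: either $u \notin S_t$ (inherit directly), or $u$ was in the child's set $A'$, in which case forgetting $u$ requires that $u$ already satisfies the full requirements (distance at most $2$ to all of $S_t$, as in property~(\ref{prop-forgotten-dist-2-part-1}), and $f(u) = r$ triangles, as in property~(\ref{prop-number-of-triangles-forgotten})) since $u$ will never again gain neighbors; I then fold $u$'s adjacency into the family $\cA$ by recording $N_{G_t}(u) \cap A$ as a member of $\cA$. For the \textbf{join node} with children $t_1, t_2$ sharing bag $X_t$, I would combine tuples $\tau_1(A), \tau_2(A)$ over the same set $A$, summing the sizes while subtracting $|A|$ to avoid double-counting, combining triangle counts additively (capped at $r$, taking care that triangles with all three vertices in $A$ are counted once rather than twice), and unioning the two neighborhood families $\cA_1 \cup \cA_2$.

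The main obstacle, and the step requiring the most care, is the \emph{distance-at-most-two verification} encapsulated in properties~(\ref{prop-forgotten-dist-2-part-1}) and~(\ref{prop-subset-information-forgot-vertex}). Unlike purely local graph properties, the $2$-club constraint is what forces the family $\cA$ into the state: a forgotten vertex $x$ must remain within distance $2$ of \emph{every} vertex of $S_t$, including other forgotten vertices and vertices introduced much later, yet once $x$ is forgotten its only record is its adjacency pattern $N_{G_t}(x) \cap A$ stored in $\cA$. The delicate point is arguing that storing only these \emph{subsets of the current bag} suffices to certify all future distance constraints: when a new vertex $w$ is later introduced into the active bag, I must check that $w$ is within distance $2$ of every already-forgotten vertex, and this check is possible precisely because any length-$2$ path between $w$ and a forgotten $x$ must route through a common neighbor that lies in the bag at the time $x$ was forgotten (by the connectivity property~(iii) of tree decompositions, the separator $X_t$ cuts $V(G_t)$ from the rest). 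I would prove, as the key lemma, that $B \in \cA$ records exactly the ``interface'' needed so that $w$ has distance at most $2$ to the corresponding forgotten vertices if and only if $N(w) \cap A$ meets each recorded $B$ appropriately or $w$ is itself adjacent to the right bag vertices. Establishing that this finite, bag-local information is both necessary and sufficient — so that the recurrences neither accept an invalid partial solution nor reject a valid one — is where the proof's correctness genuinely lives, and it is where I would concentrate the detailed case analysis.
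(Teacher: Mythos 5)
Your proposal reproduces the paper's architecture almost exactly: the same state $(A, f, \cA)$, the same count of table entries, the same introduce-time test that the new vertex's neighbourhood meets every recorded set $B \in \cA$, the same forget-time requirements ($f(u)=r$ and recording $N_{G_t}(u)\cap A$), and the same inclusion--exclusion $f_1(u)+f_2(u)-|E(N(u)\cap A)|$ for triangle counts at joins. However, your join rule has a genuine gap: you merge the interface families simply as $\cA = \cA_1 \cup \cA_2$ with no condition relating the two sides. This is unsound. Take a forgotten vertex $x \in R_1 \setminus A$ of the left child and a forgotten vertex $y \in R_2 \setminus A$ of the right child. Since $X_t$ separates $V(G_{t_1})\setminus X_t$ from $V(G_{t_2})\setminus X_t$, the vertices $x$ and $y$ are non-adjacent and every common neighbour of theirs lies in $A$; hence they are within distance $2$ if and only if their recorded interfaces $N(x)\cap A$ and $N(y)\cap A$ intersect. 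Your recurrence never tests this, so it merges partial solutions in which two forgotten vertices are at distance at least $3$, violating item-(\ref{prop-forgotten-dist-2-part-1}) of Definition~\ref{defn:tw-dp-tuple-candidate-soln}; the value ${\dptw}[t,A,\tau(A)]$ then overestimates, the error propagates to the root, and the algorithm can answer yes on no-instances. The paper enforces precisely this constraint as item-(\ref{join-cond-3}) of Definition~\ref{defn:join-A-compatible}: two disjoint sets of $\cA$ must both originate from the same child --- equivalently, every interface contributed by one side must intersect every interface contributed by the other.

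Note that this check cannot be recovered from the key lemma you propose to prove, since that lemma concerns introduce nodes, where a newly introduced bag vertex $w$ is compared against the recorded sets. Two vertices forgotten in different subtrees of a join never coexist in any bag, so no introduce or forget step ever compares them; the join node is the unique place where their mutual distance can be certified, and your state does support the test (pairwise intersection between $\cA_1$ and $\cA_2$) --- you simply must impose it. A second, more mechanical omission: at a forget node you add $N_{G_t}(u)\cap A$ to the family, but you must also project the existing sets, replacing each $B' \in \cA'$ that contains $u$ by $B' \setminus \{u\}$ (items (\ref{remove-cond-2}) and (\ref{remove-cond-3}) of Definition~\ref{defn:forget-A'-compatible}); otherwise $\cA$ is no longer a family of subsets of the current bag intersection $A$, and the later intersection tests at introduce and join nodes are performed against stale interfaces. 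Relatedly, the distance check when forgetting $u$ must allow the length-$2$ witness to be an \emph{earlier forgotten} vertex, i.e.\ the disjunct ``$\{x,u\}\subseteq B$ for some $B\in\cA'$'' in item-(\ref{remove-cond-5}) of that definition, which your sketch does not mention.
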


\begin{proof}
When a nice tree decomposition $\cT$ is given, a node $t \in V(T)$ is a leaf node, or an introduce node, or a forget node, or a join node.
For the sake of simplicity, given a vertex subset $S \subseteq V_t$, we use $E(S)$ to denote the set of edges in $G_t[S]$
We use bottom-up dynamic programming algorithm to compute the values of these subproblems as follows.
Our bases cases occur at every leaf node.

\begin{description}
	\item[(i)] 
	%\noindent
	%(i) 
	{\bf Leaf Node}: If $t$ is a leaf node, then $X_t = \emptyset$. 
	Then, the only possible subset $A = \emptyset$ and the only possible function $f: A \rightarrow [r] \cup \{0\}$ is the empty function.
	Hence, a tuple $\tau(A)$ is either $(f, \emptyset)$ or $(f, \{\emptyset\})$ with the condition that $f$ is an empty function.
	Observe that $(f, \{\emptyset\})$ is invalid for $t$ because $\emptyset$ is the only vertex subset of $G_t$ and there is no vertex outside the bag such that $\emptyset \in \cA$.
	Hence, we have the following two situations.
	\begin{equation}
	\label{eq:leaf-node}
	{\dptw}[t, \emptyset, \tau(\emptyset)] = 
%	\[
	\begin{cases}
		0  & \text{ if } \tau(\emptyset) = (\emptyset, \emptyset) \\
		-\infty  & \text{ otherwise }
	\end{cases} 
	\end{equation}  
	
	\item[(ii)] 
	%\noindent
	%(ii) 
	{\bf Introduce Node}: Let $X_t$ be an introduce node and $X_{t'}$ be its unique child such that $v \in X_t \setminus X_{t'}$.
	%There are two possibilities for $v$.	
	We need the following definition to compute ${\dptw}[t, A, \tau(A)]$ using the values of the subproblems in its child.
	
	\smallskip
	
	\begin{definition}[Introduce-$v$-compatible]
	\label{defn:introduce-A-prime-compatible}
	Let $t \in V(T)$ be an introduce node with its unique child $t' \in V(T)$ such that  $X_t \setminus X_{t'} = \{v\}$, and $A \subseteq X_t$.
	Suppose that $v \in A$ and $\tau(A) = (f, \cA)$ is a valid tuple for $t$, $A' = A \setminus \{v\}$, and $\gamma(A') = (f', \cA')$ is a valid tuple for $t'$.
	Furthermore, for $x, y \in A$, let $r_A(t, x, y)$ denote the number of triangles in $G_t[A]$ that contain both $x$ and $y$. 
	Then, $\tau(A)$ is {\em introduce-$v$-compatible} with $\gamma(A')$ if 
	%\vspace{-2mm}
	\begin{enumerate}
	%	\item\label{introduce-compat-1} $A' = A \setminus \{v\}$ or $A' = A$.
	%	\item\label{introduce-compat-2} if $A = A'$, then for every $u \in A$, $f(u) = f'(u)$ and $\cA = \cA'$.
	%	\item\label{introduce-compat-3} if $A = A' \cup \{v\}$, then 
	%	\begin{enumerate}
		\item\label{add-v-cond-1} for every $u \in A'$, $f(u) = \min\{r, f'(u) + r_A(t, u, v)\}$,
		\item\label{add-v-cond-2} $f(v) = \min\{r, |E(N_{G_t}(v) \cap A)|\}$,
		\item\label{add-v-cond-3} $\cA = \cA'$, and
		\item\label{add-v-cond-4} for all $B \in \cA'$, $B \cap N_{G_t}(v) \neq \emptyset$.
	%	\end{enumerate}
	\end{enumerate}
	\end{definition}
	
	%It is not very hard to observe from the above definition that in case $v \notin A$, then $\tau(A)$ is the unique tuple for $t'$ that is introduce compatible with itself.
	%Hence, we have the following observation.
	 
	%\medskip
	Consider an introduce node $t$ with unique child $t'$ such that $X_t \setminus X_{t'} = \{v\}$.
	%Then, to understand the structural relation between a feasible solution of $G_{t'}$ and a feasible solution of $G_t$, we need to dive deeper about candidate solutions of $t$ and $t'$ with respect to tuples $\tau(A)$ and $\gamma(A')$, respectively when $A \subseteq X_t$ and $A' \subseteq X_{t'}$.
	%Consider a candidate solution $R$ for $t$ with respect to $\tau(A)$.
	%If $v \notin R$, then $v \notin A$ and
	%NOT IMPORTANT FOR NOW
	%If $R'$ is a candidate solution for $t'$ with respect to $\gamma(A')$ and $R' \cup \{v\}$ is a candidate solution for $t$ with respect to $\tau(A)$ such that $\tau(A)$ is introduce compatible with $\gamma(A')$, then we say that {\em $v$ extends $R'$ in $t$ with respect to $\tau(A)$}.
	There are two cases that can arise whether $v \in A$ or not.
	%If $v \notin A$, then $A' = A$ and the 
	%When $v \in A$, then item (\ref{introduce-compat-3}) and each of its subconditions are critical.
	%Other
	If $v \in A$, then we use $\cB_v(\tau(A))$ to denote the collection of all valid tuples $\gamma(A') = (f', \cA')$ for $t'$ that are introduce-$v$-compatible with $\tau(A)$.
	%If there does not exist any tuple $\gamma(A')$ for $t'$ such that $\tau(A)$ is introduce-$v$-compatible with respect to $\gamma(A')$, then
	%Analogously, if $v \notin A$ and $\tau(A)$ is a valid tuple for $t$, then $\tau(A)$ itself is the unique valid tuple for $t'$ that is introduced compatible with itself.
	%we use ${{\ccc}_{v}(\tau(A))}$ to denote the collection of all valid tuples $\gamma(A')$ of $t'$ that are introduce-compatible with $\tau(A)$.
	We are ready to state our next claim that provides a recurrence for introduce node.
	
	%%%%%Main-Recurrence
	
	\begin{claim}
	\label{claim:intro-node-correctness}
	Let $t$ be an introduce node, $A \subseteq X_t$ with child $t'$ such that $X_t \setminus X_{t'} = \{v\}$, and $\tau(A)$ be a tuple for $t$.
	\begin{itemize}
		\item If $v \in A$, then let $A' = A \setminus \{v\}$.
		If ${\cB}_v(\tau(A)) \neq \emptyset$, then 
		$${\dptw}[t, A, \tau(A)] =
		1 + \max\limits_{\gamma(A') \in {\cB}_v(\tau(A))} {\dptw}[t', A', \gamma(A')]$$
		%%%	
		If ${\cB}_v(\tau(A)) = \emptyset$, then ${\dptw}[t, A, \tau(A)] = -\infty$.
		
		\item If $v \notin A$, then 
		${\dptw}[t, A, \tau(A)] = {\dptw}[t', A, \tau(A)]$.
	\end{itemize} 
	Finally, ${\dptw}[t, A, \tau(A)]$ can be computed in $r^{\OO(k)}n^{\OO(1)}$-time.
	%such that  and 
	\end{claim}

	\begin{proof}
	We only establish the correctness of the recurrence relations.
In order to establish that, we need to prove two auxiliary claims.
The first one is the following.

\begin{nclaim}
	\label{claim:intoduce-v-validity}
	Let $t \in V(T)$ be an introduce node and its unique child $t'$ such that $X_t \setminus X_{t'} = \{v\}$, $A \subseteq X_t$ such that $v \in A$, $\tau(A)$ is a tuple for $t$, and $A' = A \setminus \{v\}$.
	If $\tau(A)$ is a valid tuple for $t$ and $R$ is a candidate solution for $t$ with respect to $\tau(A)$, then there exists a valid tuple $\gamma(A') = (f', \cA')$ for $t'$ such that 
	\begin{itemize}
		\item $\tau(A)$ is introduce-$v$-compatible with $\gamma(A')$, and
		\item $R \setminus \{v\}$ is a candidate solution for $t'$ with respect to $\gamma(A')$.
	\end{itemize} 
	\end{nclaim}
	
	\begin{proof}
	Consider an introduce node $t$ and its unique child $t'$ such that $X_t \setminus X_{t'} = \{v\}$, $A \subseteq X_t$ such that $v \in A$, and $\tau(A)$ is a tuple for $t$.
	Assume that $A' = A \setminus \{v\}$.
	
	Let $\tau(A) = (f, \cA)$ be a valid tuple for $t$ and $R \subseteq V(G_t)$ is a candidate solution for $t$ with respect to $\tau(A)$.
	Since $G_t[R] = G[R]$, we use $G[R]$ for simplicity.
	We provide the existence of a tuple $\gamma(A') = (f', \cA')$ for $t'$ as follows.
	First, we set $\cA' = \cA$.
	It ensures that item-(\ref{add-v-cond-3}) of Definition \ref{defn:introduce-A-prime-compatible} is satisfied.
	Then, for every $u \in A'$, we compute $r_A(t, u, v)$, the number of triangles in $G_t[A]$ that both $u$ and $v$ are part of.
	For every $u \in A'$, if $f(u) < r$, then we set $f'(u) = f(u) - r_A(t, u, v)$.
	It ensures that item-(\ref{add-v-cond-1}) of Definition \ref{defn:introduce-A-prime-compatible} is satisfied for every $u \in A'$ such that $f(u) < r$.
	Since $R$ is a candidate solution for $t$ with respect to $\tau(A)$, clearly $f(v)$ must be the number of triangles in $G[R]$ that $v$ is part of subject to a maximum of $r$.
	This number is clearly the same as the number of edges in the neighborhood of $v$ in $A$ subject to a maximum of $r$.
	Therefore, item-(\ref{add-v-cond-2}) is satisfied.
	%In case $f(u) = r$, then \todo[inline]{Diptapriyo: need to be clever to explain existence.}
	
	\medskip
	
	We focus on $R' = R \setminus \{v\}$.
	As $R$ is candidate solution for $t$ with respect to $\tau(A)$, due to item-(\ref{prop-number-of-triangles-forgotten}), every $x \in R \setminus A$ is part of at least $r$ triangles in $G[R]$.
	Note that any triangle of $G[R]$ that $x \in R \setminus A$ is part of cannot contain $v$ as $x$ is not adjacent to $v$.
	Therefore, every such triangle of $G[R]$ avoids $v$, implying that such triangles are also in $G[R']$.
	As $R \setminus A = R' \setminus A'$, and every such triangle that $x \in R' \setminus A'$ is part of avoids $v$, hence item-(\ref{prop-number-of-triangles-forgotten}) is satisfied for $R'$.
	
	Due to item-(\ref{prop-forgotten-dist-2-part-1}), every $x \in R \setminus A$ is at distance at most 2 from every $y \in R$ in $G[R]$.
	Consider any arbitrary $x \in R'$ and $y \in R' \setminus A'$.
	As $v$ is not adjacent to $y \in R' \setminus A'$, observe that a path of length 2 between $x$ and $y$ in $G[R]$ does not pass through $v$.
	Hence, a path of length 2 between $x$ and $y$ in $G[R]$ is also a path of length 2 between $x$ and $y$ in $G[R']$.
	It implies that item-(\ref{prop-forgotten-dist-2-part-1}) of Definition \ref{defn:tw-dp-tuple-candidate-soln} is satisfied for $R'$.
	%Hence, item-(\ref{prop-forgotten-dist-2-part-1}) is also satisfied for $R'$.
	
	Due to item-(\ref{prop-forgotten-dist-2-part-1}), $v$ is of distance at most 2 to every $u \in R \setminus A$ in $G[R]$.
	Due to item-(\ref{prop-subset-information-forgot-vertex}), for every $u \in R \setminus A$, $N_{G_t}(u) \cap A \in \cA$.
	As $N_{G_t}(v) \subseteq X_t$, it must be that $v$ is of distance exactly 2 from every $u \in R \setminus A$.
	It means that $v$ has a common neighbor with every $u \in R \setminus A$ in $G[R]$.
	Therefore, if $B \in \cA$, then there is $u \in R \setminus A$ such that $N_{G_t}(u) \cap A = B$.
	As $u$ is of distance at most 2 from $v$, hence $v$ must have a neighbor $w$ in $B$.
	Hence, for every $B \in \cA$, $N_{G_t}(v) \cap B \neq \emptyset$.
	It satisfies item-(\ref{add-v-cond-4}) of Definition \ref{defn:introduce-A-prime-compatible}.
	
	%As $v \in R$ and $R$ is a candidate solution for $t$ with respect to tuple $\tau(A)$, $f(v)$ is the minimum of $r$ and the number of triangles that $v$ is part of in $G[R]$ that is precisely the number of edges in $E(N_{G_t}(v) \cap A)$.
	%Hence, item-(\ref{add-v-cond-2}) is satisfied.
	%As we have set $\cA' = \cA$, item-(\ref{add-v-cond-3}) is also satisfied.
	Note that if $u \in R' \setminus A'$, then $N_{G_t}(u) \cap A' = N_{G_t} \cap A \in \cA$.
	Hence, $N_{G_t}(u) \cap A' \in \cA'$, satisfying item-(\ref{prop-subset-information-forgot-vertex}) for $R'$.
	
	Now, for every $u \in A \setminus \{v\}$, we consider the case when $f(u) = r$.
	%Then, we consider the number of triangles in $G[R]$ that $u$ is part of.
	Notice that $r_A(t, u, v)$ is the number of triangles that both $u$ and $v$ are part of.
	%If $f(u) < r$, then we have set $f'(u) = f(u) - r_A(t, u, v)$.
	%Hence, item-(\ref{add-v-cond-1}) is satisfied.
	If $f(u) = r$, then it is possible that $u$ can be part of at least $r$ triangles in $G[R]$.
	Out of these triangles, we focus on the number of triangles that are in $G[R']$.
	If $u$ is part of at least $r$ triangles in $G[R']$, then $f'(u) = r$.
	Otherwise, $f'(u)$ is set to the exact number of triangles ($\leq r-1$) in $G[R']$ that $u$ is part of.
	It ensures that item-(\ref{add-v-cond-1}) of Definition \ref{defn:introduce-A-prime-compatible} is satisfied for every $u \in A'$ such that $f(u) = r$.
	As $f'(u)$ can be assigned some value from $\{0,1,\ldots,r\}$ for every $u \in A'$, it also ensures us that item-(\ref{prop-number-of-triangles-intersection}) is satisfied for $R'$.
	Finally, $R' \cap V(G_{t'}) = A'$. It implies that item-(\ref{prop-bag-intersection}) is satisfied for $R'$.
	
	As we have provided the existence of a tuple $\gamma(A') = (f', \cA')$ satisfying items-(\ref{add-v-cond-1})-(\ref{add-v-cond-4}) of Definition \ref{defn:tw-dp-tuple-candidate-soln}, and $R'$ is a subset satisfying the properties of item-(\ref{prop-bag-intersection})-(\ref{prop-subset-information-forgot-vertex}) of Definition \ref{defn:tw-dp-tuple-candidate-soln}, this completes the proof.
	\end{proof}
	
	Our next claim provides a converse variant of the previous claim.
	 
	\begin{nclaim}
	\label{claim:introduce-validity-child} 
	Let $t \in V(T)$ be an introduce node and its unique child $t'$ satisfying $X_t \setminus X_{t'} = \{v\}$, and $A \subseteq X_t$ such that $v \in A$ and $A' = A \setminus \{v\}$.
	Furthermore, $\gamma(A')$ is a valid tuple for $t'$ and $R'$ is a candidate solution for $t'$ with respect to $\gamma(A')$.
	If $\tau(A) = (f, \cA)$ is a tuple for $t$ such that $\tau(A)$ is introduce-$v$-compatible with $\gamma(A')$, then $R' \cup \{v\}$ is a candidate solution for $t$ with respect to $\tau(A)$.
%	If every $x \in R' \setminus A$ has a neigbor $w$ in $G[R' \cup \{v\}]$ such that $w$ is also a neighbor of $v$, then there exists a valid tuple $\tau(A)$ for $t$ such that 
%	\begin{itemize}
%		\item $R' \cup \{v\}$ is a candidate solution for $t$ with respect to $\tau(A)$, and
%		\item $\tau(A)$ is introduce compatible with $\gamma(A')$.
%	\end{itemize}
	\end{nclaim}
	
	\begin{proof}
	Consider the introduce node $t$ and its unique child $t'$ such that $X_t \setminus X_{t'} = \{v\}$, $A \subseteq X_t$ such that $v \in A$. and $A' = A \setminus \{v\}$.
	Let $\gamma(A') = (f', \cA')$ be a valid tuple for $t'$ and $R'$ be a candidate solution for $t'$ with respect to $\gamma(A')$ and $\tau(A) = (f, \cA)$ such that $\tau(A)$ is introduce-$v$-compatible with $\gamma(A')$.
	We consider $R = R' \cup \{v\}$.
	
	Based on the assumption in the statement, $\tau(A)$ is introduce-$v$-compatible with $\gamma(A')$.
	Hence, due to item-(\ref{add-v-cond-4}), every $B \in \cA'$ contains a vertex that is a neighbor of $v$.
	As $\cA'$ contains the neighborhood of every $x \in R' \setminus A'$, it follow that every $x \in R' \setminus A'$, there is a common neighbor between $v$ and $x$ in $A'$.
	As $R \setminus A = R' \setminus A'$, for every $x \in R \setminus A$, the distance between $x$ and $v$ is at most 2 in $G[R]$.
	Also, $v$ is not adjacent to any $x \in R \setminus A$.
	As $R'$ is a candidate solution for $t'$ with respect to $\gamma(A')$, due to item-(\ref{prop-forgotten-dist-2-part-1}) of Definition \ref{defn:tw-dp-tuple-candidate-soln}, every $x \in R' \setminus A'$ has distance at most 2 to every $w \in R'$ in $G[R']$.
	As $G[R']$ is a subgraph of $G[R]$, the same path exists in $G[R]$ as well.
	Hence every $x \in R \setminus A$ has distance at most 2 to every $w \in R$, satisfying item-(\ref{prop-forgotten-dist-2-part-1}) of Definition \ref{defn:tw-dp-tuple-candidate-soln} for $R$.
	
	Clearly, $R \cap V(G_t) = A$, hence item-(\ref{prop-bag-intersection}) is satisfied for Definition \ref{defn:tw-dp-tuple-candidate-soln}.
	%We justify the existence of $f: A \rightarrow [r] \cup \{0\}$ and $\cA$ as follows.
	%\todo[inline]{Diptapriyo: this proof needs completion.}
	%First, we set $\cA = \cA'$.
	%This ensures that item-(\ref{add-v-cond-3}) of Definition \ref{defn:introduce-A-prime-compatible} is satisfied.
	Next, we check every $u \in A$ one by one.
	%We consider $r_A(t, u, v)$ the number of triangles that both $u$ and $v$ were pat of.
	%For every $u \in A'$, we set $f(u) = \min\{r, f'(u) + r_A(t, u, v)\}$.
	%Additionally, we also consider how many triangles that $v$ is part of when we consider only the neighbors of $v$ that are in $A$, as $N_{G_t}(v) \cap (R' \cup \{v\}) \subseteq A$.
	%Observe that the number of triangles $v$ is part of is only the number of edges in $G[N_{G_t}(v) \cap (R' \cup \{v\})]$, that we denote as $|E(N_{G_t}(v) \cap A)|$.
	%Hence, we set $f(v) = \min\{r, |E(N_{G_t}(v) \cap A)|\}$.
	%This completes the description of $\tau(A) = (f, \cA)$.
	As $\tau(A)$ is introduce compatible with $\gamma(A')$, this function $f: A \rightarrow [r]\cup \{0\}$ satisfies item-(\ref{add-v-cond-2}) of Definitio \ref{defn:tw-dp-tuple-candidate-soln}.
	Hence, $f(v)$ is the number of edges that are among the neighbors of $v$ intersecting $A$.
	As $N_{G_t}(v) \subseteq A$, if $v$ is part of a triangle $\{v, x, y\}$ in $G_t[R]$, then $xy$ must be an edge such that $x, y \in N_{G_t}(v) \subseteq A$.
	Since by the definition of $\tau(f, \cA)$, $f(v)$ is the number of edges in $E(N_{G_t}(v) \subseteq A)$ subject to a maximum of $r$, note that $f(v)$ correctly sets the number of triangles that contains $v$. 
	For every other $u \in A \setminus \{v\}$, based on item-(\ref{add-v-cond-1}) of Definition \ref{defn:introduce-A-prime-compatible}, $f(u) = \min\{r, f'(u) + r_A(t, u, v)\}$.
	Now, let us examine the number of triangles in $G_t[R]$ that contains $u$.
	It is the sum of the number of triangles in $G_t[R']$ containing $u$ and the number of triangles in $G_t[R]$ that contains both $u$ and $v$.
	The number of triangles in $G_t[R]$ containing both $u$ and $v$ is $r_A(t, u, v)$.
	As $R'$ is a candidate solution for $t'$ with respect to $(f', \cA')$,  $u$ participates in $f'(u)$ triangles in $G_t[R']$.
	As $f(u)$ is assigned $f'(u) + r_A(t, u, v)$ subject to a maximum of $r$, $f(u)$ correctly sets the number of triangles in $G_t[R]$ that $u$ is part of.
	Hence, item-(\ref{prop-number-of-triangles-intersection}) is satisfied for $R$.
%	\todo[inline]{Diptapriyo: this has to be shortened}
	%In particular, our precise definition of $\tau(A)$ also provides the number of triangles in $G_t[R]$ that every $w \in A$ is part of.
	%Hence, $R$ satisfies item-(\ref{prop-number-of-triangles-intersection}) of Definition \ref{defn:tw-dp-tuple-candidate-soln}.
	
	As $\tau(A)$ is introduce compatible with $\gamma(A')$, due to item-(\ref{add-v-cond-3}), it follows that $\cA = \cA'$.
	Since $\cA$ stores the neighborhood for every $x \in R \setminus A$ in $R'$, due to item-(\ref{add-v-cond-4}) for every $x \in R \setminus A$, $x$ has a neighbor in $R$ that is also a neighbor of $v$.
	Hence, every $x \in R \setminus A$ has distance 2 with $v$.
	As for every $x \in R \setminus A$, $B_x = N_{G_t}(x) \cap A' \in \cA'$, $\cA' = \cA$, hence $N_{G_t}(v) \cap B_x \neq \emptyset$.
	Therefore, item-(\ref{prop-subset-information-forgot-vertex}) from Definition \ref{defn:tw-dp-tuple-candidate-soln} are satisfied.
	
	Finally, we consider the number of triangles in $G_t[R]$ that every $x \in R \setminus A$ is part of.
	Since every $x \in R \setminus A$ is also in $R' \setminus A'$, and $R'$ is a candidate solution for $t'$ with respect to $\gamma(A')$, $x$ is part of at least $r$ triangles in $G_{t'}[R']$.
	And none of these $r$ triangles contain $v$.
	Therefore, every $x \in R \setminus A$ is part of at least $r$ triangles even in $G_t[R]$ as that is a supergraph of $G_{t'}[R']$.
	Hence, $R'$ satisfies item-(\ref{prop-number-of-triangles-forgotten}) from Definition \ref{defn:tw-dp-tuple-candidate-soln}.
	
	As $\tau(A)$ satisfies all the conditions with $\gamma(A')$ from Definition \ref{defn:introduce-A-prime-compatible}, this proves the second item of this claim.
	As $R$ satisfies the conditions from Definition \ref{defn:tw-dp-tuple-candidate-soln}, this proves the first item of this claim.
	\end{proof}

	Now, using the above two claims, we give the proof for two cases, $v \in A$ and $v \notin A$.
	
	\begin{itemize}
		\item Case $v \in A$. 
		%Let $A' = A \setminus \{v\}$.
		%
		%Let $A \subseteq X_t$, and $\tau(A) = (f, \cA)$ be a  tuple for $t$.
		Consider a solution $R$ for $t$ with respect to $\tau(A)$.
		Then, due to Claim \ref{claim:intoduce-v-validity}, there exists a valid tuple $\gamma(A')$ for $t'$ such that $R' = R \setminus \{v\}$ is a candidate solution for $t'$ with respect to $\gamma(A')$.
		As $|R'| \leq {\dptw}[t', A', \gamma(A')]$, and ${\dptw}[t', A', \gamma(A')] \leq \max_{\gamma(A') \in {\cB}_v(\tau(A))} {\dptw}[t', A', \gamma(A')]$, the following inequality holds true. 
	
		$${\dptw}[t, A, \tau(A)] = |R| = 1 + |R'| \leq 1 + \max_{\gamma(A') \in {\cB}_v(\tau(A))} {\dptw}[t', A', \gamma(A')]$$

		Conversely, let $R' \subseteq V(G_{t'})$ such that $|R'| = \max_{\gamma(A') \in {\cB}_v(\tau(A))} {\dptw}[t', A', \gamma(A')]$.
		Let $\beta(A') = (f', \cA')$ be the tuple for $t'$ such that $\tau(A)$ is introduce compatible with $\beta(A')$ and $R'$ be an optimal candidate solution for $t'$ with respect to $\beta(A')$.
		Then, it follows from Claim \ref{claim:introduce-validity-child}, $R' \cup \{v\}$ is a candidate solution for $t$ with respect to $\tau(A)$.
		Hence, $|R'| + 1 \leq {\dptw}[t, A, \tau(A)]$, and we have
		
		$$1 + \max_{\gamma(A') \in {\cB}_v(\tau(A))} {\dptw}[t', A', \gamma(A')] = 1 + |R'| = |R| \leq {\dptw}[t, A, \tau(A)]$$ 
		
		This completes our proof when $v \in A$.
		
		\medskip
		
		\item Case $v \notin A$. Then, $A' = A$.
		If $R$ is a candidate solution for $t$ with respect to $\tau(A) = (f, \cA)$, observe that $G_t[R]$ and $G_{t'}[R]$ are the same graphs.
		Hence, $R$ is a candidate solution for $t'$ with respect to $(f, \cA)$.
		Hence, $|R| \leq {\dptw}[t', A, \tau(A)]$.
		By using similar arguments, we can prove that $|R| \geq {\dptw}[t', A, \tau(A)]$.
	\end{itemize}
	As the cases are mutually exhaustive, this completes the proof of correctness for both the cases.	
	\end{proof}
	
	\smallskip
	
	\item[(iii)] 
	%\noindent
	%(iii) 
	{\bf Forget node}: Let $X_t$ be a forge node and $t'$ be its unique child such that $X_{t'} = X_t \cup \{v\}$ such that $v \notin X_t$.
	Similar to Definition \ref{defn:introduce-A-prime-compatible}, we define the following term in the context of forget nodes.
	
	%\medskip
	
	\begin{definition}[Forget-$v$-Compatible]
	\label{defn:forget-A'-compatible}
	Let $t$ be a forget node with $A \subseteq X_t$ with unique child $t'$ such that $\{v\} = X_{t'} \setminus X_{t}$, $\tau(A) = (f, \cA)$ be a valid tuple for $t$, and $A' = A \cup \{v\}$.
	Consider a valid tuple $\gamma(A') = (f', \cA')$ for $t'$.
	Then, $\tau(A)$ is {\em forget-$v$-compatible} with $\gamma(A')$ if
	%\begin{enumerate}
		%\item\label{forget-compat-cond-1} $A = A'$ or $A' = A \cup \{v\}$. 
		%\item\label{forget-compat-cond-2} If $A' = A$, then $\cA = \cA'$ and for every $u \in A$, $f(u) = f'(u)$.
		%\item\label{forget-compat-cond-3} If $A' = A \cup \{v\}$, then it must be that
	%\vspace{-1.5mm}
	\begin{enumerate}
		\item\label{remove-cond-1} $f'(v) = r$ and for every $u \in A$, $f'(u) = f(u)$,
	%	\item\label{remove-cond-2} $N_{G_t}(v) \cap A \in \cA$,
		\item\label{remove-cond-2} for every $B' \in \cA'$, if $v \in B'$, then $B' \setminus \{v\} \in \cA$,
		\item\label{remove-cond-3} for every $B' \in \cA'$, if $v \notin B'$, then $B' \in \cA$, %and
		\item\label{remove-cond-4}
		If $N_{G_t}(v) \cap A \in \cA$, and
		\item\label{remove-cond-5} for every $x \in A$, $\{x, v\} \subseteq B$ for some $B \in \cA'$, or the distance between $x$ and $v$ is at most 2 in $G_{t'}[A']$.
	\end{enumerate} 
	%\end{enumerate}	
	\end{definition}
	
	%In the above definition, item-(\ref{remove-cond-1}) for $v$ provides us an intuition that if $v$ is to be part of a candidate solution for $t$ with respect to $\tau(A)$, then it is desirable that $v$ participates in at least $r$ triangles in in the c
	
	%\medskip

	%Our recurrence for the forget node is as follows.
	
	%\begin{equation}
	%\label{eq:forget-node}
	%\end{equation}
	
	If $\tau(A)$ is a valid tuple for $t$, then let $\ccC_v(\tau(A))$ denote the set of all valid tuples $\gamma(A \cup \{v\})$ such that 
	 %$A' = A \cup \{v\}$, and 
	 $\tau(A)$ is forget-$v$-compatible with $\gamma(A')$.
	 Our next claim gives us the recurrence for forget nodes.
	
	\begin{claim}
	\label{claim:correctness-forget-node}
	If $t$ is a forget node, $A \subseteq X_t$ with child $t'$ such that $X_{t'} \setminus X_t = \{v\}$, $A' \setminus A = \{v\}$, and $\tau(A)$ is a tuple for $t$, 
	then 
	%we first enumerate all possible tuples $\gamma(A') = (f', \cA')$ such that $\gamma(A') \in {\ccC}_v(\tau(A))$, then
	$${\dptw}[t, A, \tau(A)] = \max \bigg\{ {\dptw}[t', A, \tau(A)],  \max\limits_{\gamma(A') \in \ccC_v(\tau(A))} {\dptw}[t', A', \gamma(A')]\bigg\}$$
	
	%If ${\ccC}_v(\tau(A)) = \emptyset$, then $${\dptw}[t, A, \tau(A)] = {\dptw}[t', A, \tau(A)]$$
	
	 The value of ${\dptw}[t, A, \tau(A)]$ can be computed in $2^{2^{\OO(k)}} n^{\OO(1)}$-time.	
	\end{claim}

	\begin{proof}
	We need to prove the following two claims that would help us to define the recurrence for forget nodes.
	
	\begin{nclaim}
	\label{claim:forget-v-parent}
	Let $t$ be a forget node with child $t'$ such that $v \in X_{t'} \setminus X_{t}$, $A \subseteq X_t$ and $A' = A \cup \{v\}$.
	If $\tau(A) = (f, \cA)$ is a valid tuple for $t$ and $R$ is a candidate solution for $t$ with respect to $\tau(A)$ such that $v \in R$, then there exists a valid tuple $\gamma(A')$ for $t'$ such that $R$ is a candidate solution for $t'$ with respect to $\gamma(A')$.
	\end{nclaim}
	
	\begin{proof}
	Let $t$ be a forget node with child $t'$ such that $v \in X_{t'} \setminus X_{t}$, $A \subseteq X_t$ and $A' = A \cup \{v\}$.
	Suppose that $\tau(A) = (f, \cA)$ is a valid tuple for $t$ and $R$ is a candidate solution for $t$ with respect to $\tau(A)$ such that $v \in R$.
	Observe that $G_t[R]$ is the same graph as $G_{t'}[R]$.
	Since $v \in A' \cap R$, and $A = R \cap X_t$, follows that $A' = X_{t'} \cap R$.
	
	We justify the existence of a tuple $\gamma(A') = (f', \cA')$ such that $\tau(A)$ is forget-$v$-compatible with $\gamma(A')$.
	For every $u \in A$, we set $f'(u) = f(u)$ and $f(v) = r$.
	This ensures that item-(\ref{remove-cond-1}) from Definition \ref{defn:forget-A'-compatible} is satisfied.
	We check every $B \in \cA$ one by one.
	Note that there is $u \in R \setminus A$ such that $N_{G_t}(u) \cap A = B$.
	For each vertex $u \in R \setminus A$ one by one such that $N_{G_t}(u) \cap A = B$.
	We check whether $u$ is adjacent to $v$ or not.
	If $u$ is adjacent to $v$, then we add $B \cup \{u\}$ into $\cA'$, satisfying item-(\ref{remove-cond-2}).
	Otherwise, $u$ is not adjacent to $v$, then we add $B$ into $\cA'$, satisfying item-(\ref{remove-cond-3}).
	
	Since $R$ is a candidate solution for $t$ with respect to $\tau(A)$, the distance between $v$ and every vertex of $A$ is at most 2, due to item-(\ref{prop-forgotten-dist-2-part-1}).
	Hence, there is a path between $v$ and every vertex $x \in A$.
	Such a path can pass through $A'$ only or through a vertex $y \in R \setminus A'$.
	If a path of length 2 between $x \in A$ and $v$ passes through $A'$ only, then item-(\ref{remove-cond-5}) is satisfied.
	Otherwise, a path of length 2 between $x \in A$ and $v$ passes through $y \in R \setminus A'$.
	Then, $y$ is adjacent to $v$ and $x$.
	Also, $B_y = N_{G_t}(y) \cap A \in \cA$.
	By our construction of $\cA'$, we have added $B_y \cup \{v\}$ into $\cA'$
	Hence, item-(\ref{remove-cond-5}) is satisfied.
	Since $R$ is a candidate solution for $t$ with respect to $\tau(A)$, and $v \in R \setminus A$, hence $N_{G_t}(v) \cap A \in \cA$.
	This ensures that item-(\ref{remove-cond-4}) is satisfied.
	Since all the items from Definition \ref{defn:forget-A'-compatible} is satisfied, hence $\tau(A)$ is forget-$v$-compatible with $\gamma(A')$.
	
	Now, we argue that $R$ is a candidate solution for $t'$ with respect to $\gamma(A')$.
	Observe that $R \cap X_{t'} = A'$.
	This ensures that item-(\ref{prop-bag-intersection}) from Definition \ref{defn:tw-dp-tuple-candidate-soln} is satisfied.
	We focus on $x \in R \setminus A'$ and $y \in A'$.
	Since $R$ is a candidate solution for $t$ with respect to $\tau(A)$, item-(\ref{prop-forgotten-dist-2-part-1}) is satisfied for every $w \in R \setminus A$ and every $z \in R$.
	Observe that $A' = A \cup \{v\}$.
	As $G_t[R]$ and $G_{t'}[R]$ are the same graphs, $R \setminus A' \subseteq R \setminus A$, every $w \in R \setminus A'$ is of distance at most 2 to every $z \in R$.
	Now, $v \in R$.
	Hence, every $x \in R \setminus A'$ is of distance at most 2 from every $z \in R$.
	Therefore, item-(\ref{prop-forgotten-dist-2-part-1}) is satisfied for $R$ in $t'$.
	
	Since $R$ is a candidate solution for $t$ with respect to $\tau(A)$ and $v \notin A$, hence $v$ must be part of at least $r$ triangles in $G_t[R]$.
	As $v \in A'$ and $G_t[R]$ is the same graph as $G_{t'}[R]$, our choice of $f'(v) = r$ correctly ensures that $v$ is part of at least $r$ triangles in $G_{t'}[R]$.
	Since every $u \in A$ is also in $A'$, and $f'(u) = f(u)$, clearly $u$ is part of $f'(u)$ triangles in $G_{t'}[R]$.
	It ensures us that item-(\ref{prop-number-of-triangles-intersection}) is satisfied for $R$.
	For other vertices $w \in R \setminus A'$, $w \in R \setminus A$, and they are already part of at least $r$ triangles in $G_t[R]$.
	Hence, $w$ is part of at least $r$ triangles in $G_{t'}[R]$.
	Therefore, item-(\ref{prop-number-of-triangles-forgotten}) is satisfied for $R$.
	
	Now, we consider the set $\cA'$.
	We consider every $y \in R \setminus A'$ one by one.
	If $B_y = N_{G_t}(y) \cap A \in \cA$, and $v$ is not adjacent to $y$, then we have already put $B_y$ into $\cA'$.
	Similarly, if $v$ is adjacent to $y$, then we put $B_y \cup \{v\}$ into $\cA'$.
	Hence, $\cA'$ stores the neighborhood of every vertex of $R \setminus A'$, satisfying item-(\ref{prop-subset-information-forgot-vertex}) of Definition \ref{defn:tw-dp-tuple-candidate-soln}.
	
	As $R$ satisfies every condition required for Definition \ref{defn:tw-dp-tuple-candidate-soln}, $R$ is a candidate solution for $t$ with respect to $\gamma(A')$.
	\end{proof}
	
	\begin{nclaim}
	\label{claim:forget-v-child}
	Let $t$ be a forget node with child $t'$ such that $v \in X_{t'} \setminus X_{t}$, $A \subseteq X_t$ and $A' = A \cup \{v\}$.
	Furthermore, assume that $\gamma(A') = (f', \cA')$ is a valid tuple for $t'$ and $R$ is a candidate solution for $t'$ with respect to $\gamma(A')$ such that 
	%\begin{itemize}
		%\item 
		$v \in R$, $f'(v) = r$, and
		%\item 
		for every $x \in A$, the distance between $v$ and $x$ is at most 2 in $G_{t'}[R]$.
	%\end{itemize}
	If $\tau(A)$ is a valid tuple for $t$ that is forget-$v$-compatible with $\gamma(A')$, then $R$ is a candidate solution for $t$ with respect to $\tau(A)$.
	\end{nclaim}
	
	\begin{proof}
	Let $t$ be a forget node with child $t'$ such that $v \in X_{t'} \setminus X_{t}$, $A \subseteq X_t$ and $A' = A \cup \{v\}$.
	Furthermore, our assumption includes that $\gamma(A')$ is a valid tuple for $t'$ and $R$ is a candidate solution for $t'$ with respect to $\gamma(A')$ such that $v \in R$, $f'(v) = r$ and for every $x \in A$, the distance between $x$ and $v$ is at most 2 in $G_{t'}[R]$.
	
	Since $R$ is a candidate solution for $t'$ with respect to $\gamma(A')$, every $y \in R \setminus A'$ is of distance at most 2 with every $x \in R$ in $G_{t'}[R]$.
	As $A \subseteq A'$, and by our assumption, the distance between $v$ and every $x \in A$ is at most $2$, hence, every $y \in R \setminus A$ is of distance at most 2 to every $x \in R$.
	It ensures us that item-(\ref{prop-forgotten-dist-2-part-1}) from Definition \ref{defn:tw-dp-tuple-candidate-soln} is satisfied.
	Consider every $x \in R \setminus A$.
	Observe that $R \setminus A = (R \setminus A') \cup \{v\}$.
	As $f(v) = r$, hence $v$ is part of at least $r$ triangles in $G_{t'}[R]$.
	Given that $G_t[R]$ and $G_{t'}[R]$ are the same graphs, $v$ is part of at least $r$ triangles in $G_t[R]$.
	Consider every $x \in R \setminus A$ other than $v$.
	As $R$ is a candidate solution for $t'$ with respect to $\gamma(A')$, every such $x$ is part of at least $r$ triangles in $G_{t'}[R]$.
	Since $G_{t'}[R]$ and $G_t[R]$ are the same graph, item-(\ref{prop-number-of-triangles-forgotten}) is satisfied for $R$.
	Also, $R \cap X_t = A$, hence item-(\ref{prop-bag-intersection}) is satisfied for $R$.
	
	We define $\tau(A) = (f, \cA)$ as follows.
	For every $u \in A$, we set $f(u) = f'(u)$.
	Hence, item-(\ref{prop-number-of-triangles-intersection}) from Definition \ref{defn:tw-dp-tuple-candidate-soln} is satisfied for $R$ and item-(\ref{remove-cond-1}) is satisfied for $f$.
	As $f(v) = r$ is given by assumption, hence item-(\ref{remove-cond-1}) is satisfied for $f$.
	We check every $x \in R \setminus A'$ one by one and let $B_x = N_{G_{t'}}(x) \cap A'$.
	If $v \notin B_x$, then we put $B_x$ into $\cA$. 
	Due to this, item-(\ref{remove-cond-3}) is satisfied for $f$.
	Otherwise, $v \in B_x$, and we put $B_x \setminus \{v\}$ into $\cA$.
	Because of this step, item-(\ref{remove-cond-2}) is satisfied for $f$.
	We put $N_{G_t}(v) \cap A$ into $\cA$. 
	It ensures us that item-(\ref{remove-cond-4}) is satisfied.
	Since $\cA$ stores the neighborhood of all vertices from $R \setminus A$, hence item-(\ref{prop-subset-information-forgot-vertex}) from Definition \ref{defn:tw-dp-tuple-candidate-soln} is satisfied.
	
	Now, we argue that item-(\ref{remove-cond-5}) as follows.
	Consider any $x \in A$.
	By assumption, the distance between $x$ and $v$ is at most 2 in $G_{t'}[R]$.
	Hence, $vx \in E(G)$ or there is a common neighbor $w \in R$ between $v$ and $x$.
	If $vx \in E(G)$, then we are done.
	In case $w \in A$, then $w \in A'$ and $w \neq v$, and there is a path of length at most 2 between $v$ and $x$ in $G_{t'}[A]$ as $w \in A$.
	Otherwise, $w \in R \setminus A'$.
	Then, $B_w \in N_{G_{t'}}(w) \cap A' \in \cA'$.
	Observe that $v, x \in B_w$.
	Then, $B_w \in \cA'$ such that $v, x \in B_w$.
	This completes the justification of item-(\ref{remove-cond-5}).
	
	As $\tau(A)$ satisfies every conditions from Definition \ref{defn:forget-A'-compatible} and $R$ satisfies every condition from Definition \ref{defn:tw-dp-tuple-candidate-soln} for being a solution of $t$ with respect to $\tau(A)$, it completes the proof. 
	\end{proof}
	
	Let $t$ be a forget node, $A \subseteq X_t$ and $\tau(A)$ be a valid tuple for $t$.
	Then, there is a candidate optimal solution $R \subseteq V(G_t)$ for $t$ with respect to $\tau(A)$.
	Hence, $|R| = {\dptw}[t, A, \tau(A)]$.
	If $v \in R$, then it follows from Claim \ref{claim:forget-v-parent} that there is a valid tuple $\gamma(A') = (f', \cA')$ for $t'$ such that $A' = A \cup \{v\}$, and $R$ is a candidate solution for $t'$ with respect to $\gamma(A')$.
	Then, $|R| \leq {\dptw}[t', A', \gamma(A')] \leq \max\limits_{\gamma(A') \in \ccC_v(\tau(A))} {\dptw}[t', A', \gamma(A')]$.
	
	If $v \notin R$, then clearly, $R$ is a candidate solution for $t'$ with respect to $\tau(A)$.
	Hence, ${\dptw}[t, A, \tau(A)] \leq {\dptw}[t', A, \tau(A)]$.
	
	Therefore, $${\dptw}[t, A, \tau(A)] \leq \max \bigg\{ {\dptw}[t', A, \tau(A)],  \max\limits_{\gamma(A') \in \ccC_v(\tau(A))} {\dptw}[t', A', \gamma(A')]\bigg\}$$
	
	\medskip
	
	Conversely, let $\tau(A)$ be a tuple for $t$ and $R_1$ be a candidate optimal solution for $t'$ with respect to $\tau(A)$.
	Then, $R_1$ is a candidate solution for $t$ with respect to $\tau(A)$.
	Hence, $|R_1| \leq {\dptw}[t, A, \tau(A)]$.
	Let us also assume that $R_2$ be a candidate optimal solution for $t'$ with respect to $\beta(A')$ such that $\beta(A') \in {\ccC}_v(\tau(A))$, $A' = A \cup \{v\}$ and $|R_2| = \max\limits_{\gamma(A') \in \ccC_v(\tau(A))} {\dptw}[t', A', \gamma(A')]$.
	 Then, due to Claim \ref{claim:forget-v-child}, $R_2$ is a candidate solution for $t$ with respect to $\tau(A)$.
	 Then, $|R_2| \leq {\dptw}[t, A, \tau(A)]$.
	 Therefore, it follows that ${\rm RHS} = \max\{|R_1|, |R_2|\} \leq {\dptw}[t, A, \tau(A)]$.
	\end{proof}
	
	\smallskip
	
	\item[(iv)] 
	%\noindent
	%(iv) 
	{\bf Join Node}: Let $t$ be a join node with two children $t_1$ and $t_2$ such that $X_t = X_{t_1} = X_{t_2}$.
	Observe that there are three graphs to consider in join node, $G_t, G_{t_1}$, and $G_{t_2}$.
	If a candidate solution for $G_t$ intersects $X_t$ in $A$, then its restriction in $G_{t_1}$ and $G_{t_2}$ also intersect $X_{t_1}$ and $X_{t_2}$, respectively in $A$.
	We define the following term to succinctly capture those details.
	
	%\smallskip
	
	\begin{definition}[Join-$A$-Compatible]
	\label{defn:join-A-compatible}
	Let $t$ be a join node with two children $t_1, t_2$ such that $X_t = X_{t_1} = X_{t_2}$ and $A \subseteq X_t$.
	Consider valid tuples $\tau(A) = (f, \cA), \gamma_1(A) = (f_1, \cA_1)$ and $\gamma_2(A) = (f_2, \cA_2)$ for $t, t_1$ and $t_2$ respectively.
	Then, $\tau(A)$ is {\em join-$A$-compatible} with $(\gamma_1(A), \gamma_2(A))$ if
	%\vspace{-2mm} 
	\begin{enumerate}
		\item\label{join-cond-1} $\cA = \cA_1 \cup \cA_2$, 
		\item\label{join-cond-2} for all $u \in A$, $f(u) = \min\{r, f_1(u) + f_2(u) - |E(N(u) \cap A)|\}$, and
		\item\label{join-cond-3} if there are two sets $P, Q \in \cA$ such that $P \cap Q = \emptyset$, then either $P, Q \in \cA_1$ or $P, Q \in \cA_2$.
		%$ \in \cA_1$ for all $B \in \cA_2$, $A \cap B \neq \emptyset$.
	\end{enumerate}	
	\end{definition}
	
	\medskip

	Let $\tau(A)$ be a tuple for a join node $t$ with children $t_1$ and $t_2$.
	Consider $\XX(\tau(A))$ the collection of tuple pairs $\gamma_1(A)$ for $t_1$ and $\gamma_2(A)$ for $t_2$ such that $\tau(A)$ is join-$A$-compatible with $(\gamma_1(A), \gamma_2(A))$.
	We are define our recurrence for the join-node in the next claim.
	
	\begin{claim}
	\label{claim:join-node-recurrence}
	Let $t$ be a join node with two children $t_1$ and $t_2$, $A \subseteq X_t$.
	Then, $${\dptw}[t, A, \tau(A)] = 
	%$$
	%$$
	\max\limits_{(\gamma_1(A), \gamma_2(A))\in \XX(\tau(A))} \bigg({\dptw}[t_1, A, \gamma_1(A)] + {\dptw}[t_2, A, \gamma_2(A)]\bigg) - |A|$$
	
	The value of ${\dptw}[t, A, \tau(A)]$ can be computed in $2^{2^{\OO(k)}}r^{\OO(k)}n^{\OO(1)}$-time.
	\end{claim}

	\begin{proof}
	We use Definition \ref{defn:join-A-compatible} and prove the following two claims that will help us stating a correct recurrence for the join node.
	
	\begin{nclaim}
	\label{claim:join-parent-case}
	Let $t$ be a join node with children $t_1, t_2$ and $A \subseteq X_t$.
	Suppose that $R$ is a candidate solution for $t$ with respect to $\tau(A)$, $R_1 = R \cap V(G_{t_1})$ and $R_2 = R \cap V(G_{t_2})$.
	Then, there are valid tuples $\gamma_1(A)$ and $\gamma_2(A)$ for $t_1$ and $t_2$ respectively such that
	\begin{itemize}
		\item $\tau(A)$ is join-$A$-compatible with $(\gamma_1(A), \gamma_2(A))$, and
		\item $R_1$ is a candidate solutions for $t_1$ with respect to $\gamma_1(A)$, and $R_2$ is a candidate solution for $t_2$ with respect to $\gamma_2(A)$ respectively.
	\end{itemize} 	
	\end{nclaim}
	
	\begin{proof}
	Let $t$ be a join node with children $t_1, t_2$ and $A \subseteq X_t$.
	Subsequently, we assume that $R$ is a candidate solution for $t$ with respect to $\tau(A) = (f, \cA)$, $R_1 = R \cap V(G_{t_1})$ and $R_2 = R \cap V(G_{t_2})$.
	Then, for all $x \in R \setminus A$, $x$ is part of at least $r$ triangles in $G_t[R]$ and every $v \in A$ is part of $f(v)$ triangles subject to a maximum of $r$.
	As $G_t[R]$ and $G[R]$ are the same graphs, we omit this subscript.
	Similarly, $G_{t_1}[R_1]$ and $G[R_1]$ (respectively, $G_{t_2}[R_2]$ and $G[R_2]$) are the same graphs.
	So, we drop the subscripts.
	
	We focus on $G[R_1]$ and $G[R_2]$ simultaneously and justify the existence of tuples $\gamma(A) = (f_1, \cA_1)$ for $t_1$ and $\gamma(A) = (f_2, \cA_2)$ for $t_2$ as follows.
	It is clear that $A \subseteq R_1$ and $A \subseteq R_2$.
	We focus on the following cases.
	
%%%	\todo[inline]{Diptapriyo: my edit is to continue here.}
	\begin{description}
		\item[Case-(a)] $A = R_1 = R_2$.
		Observe that $\cA = \emptyset$.
		Then, for all $u \in A$, we set $f_1(u) = f_2(u) = f(u)$ and $\cA = \cA_1 = \cA_2$.
		Since $R \setminus A = \emptyset$ and $R$ is a canddiate solution for $t$ with respect to $\tau(A)$, it must be that $\cA = \emptyset$.
		Hence, by our choice of $\gamma_1(A) = (f_1, \cA_1)$ and $\gamma_2(A) = (f_2, \cA_2)$.
		Since $G[R]$ is the same graph as $G[R_1]$, $G[A]$, and $G[R_2]$, for every $u \in A$, the number of triangles is same in each of $G[R_1]$ and $G[R_2]$.
		Hence, item-(\ref{prop-number-of-triangles-intersection}) of Definition \ref{defn:tw-dp-tuple-candidate-soln} is satisfied.
		Clearly, $A = R_1 \cap X_t = R_2 \cap X_t$.
		Hence, item-(\ref{prop-bag-intersection}) is satisfied.
		Since $R_1 \setminus A = R_2 \setminus A = \emptyset$, items-(\ref{prop-forgotten-dist-2-part-1}),(\ref{prop-number-of-triangles-forgotten}), and (\ref{prop-subset-information-forgot-vertex}) from Definition \ref{defn:tw-dp-tuple-candidate-soln} are trivially satisfied.
		%Therefore, it follows that both $R_1$ and $R_2$ ssatisfy the conditions of Definition \ref{defn:tw-dp-tuple-candidate-soln}.
		Hence, $R_1$ is a candidate solution for $t_1$ with respect to $\gamma_1(A)$ and and $R_2$ is a candidate solution for $t_2$ with respect to $\gamma_2(A)$.
		Conditions described in items-(\ref{join-cond-1}) and (\ref{join-cond-2}) are trivially satisfied.
		Since $\cA = \emptyset$, there are no pair of sets in $\cA$ that are pairwise disjoint.
		Therefore, item-(\ref{join-cond-3}) is vacuously true.
		Hence, $\tau(A)$ is join-$A$-compatible with $(\gamma_1(A), \gamma_2(A))$.
		
		\item[Case-(b)] $A = R_1 \subset R_2$.
		In this case, observe that every $x \in R \setminus A$ is in $R_2 \setminus A$.
		We assign $\cA_1 = \emptyset$ and $\cA_2 = \cA$.
		Define $f_1(u) = |E(N(u) \cap A)|$ and $f_2(u) = f(u)$.
		Clearly observe that items-(\ref{join-cond-1}) and (\ref{join-cond-2}) are satisfied.
		Consider two sets $X, Y \in \cA$ such that $X \cap Y = \emptyset$.
		As $\cA = \cA_2$, clearly $X, Y \in \cA_2$.
		Since $\cA_1 = \emptyset$, $X, Y \notin \cA_1$.
		Hence, item-(\ref{join-cond-3}) is satisfied.
		Then, $\tau(A)$ is join-$A$-compatible with $((f_1, \cA_1), (f_2, \cA_2))$.
		
		It is not hard to observe that every vertex $u$ of $R_1$ is in $A$ and each of them are part of $f_1(u)$ triangles only, hence satisfying items-(\ref{prop-bag-intersection}) and (\ref{prop-number-of-triangles-intersection}) for $R_1$.
		Since $R_1 \setminus A = \emptyset$, clearly, $\cA_1 = \emptyset$ trivially satisfies items-(\ref{prop-forgotten-dist-2-part-1}),(\ref{prop-number-of-triangles-forgotten}), and (\ref{prop-subset-information-forgot-vertex}) for $R_1$.
		Hence, $R_1$ is a candidate solution for $t_1$ with respect to $\gamma_1(A) = (f_1, \cA_1)$.
		
		Concerning $R_2$, observe that $R_2 \cap X_t = A$.
		Hence, item-(\ref{prop-bag-intersection}) is satisfied for $R_2$.
		Consider the number of triangles that $u \in A$ is part of in $G[R_2]$.
		Since $R_2 = R$, any triangle that $u \in A$ is part of is in $G[R_2]$.
		Hence, $f_2(u) = f(u)$ correctly satisfies item-(\ref{prop-number-of-triangles-intersection}) for $R_2$.
		Consider any $x \in R_2 \setminus A$.
		Then, $x \in R \setminus A$.
		As $R$ is a candidate solution for $t$ with respect to $\tau(A)$, every $x \in R \setminus A$ is in distance at most 2 to every $y \in R$ in $G[R]$.
		Since $G[R]$ and $G[R_2]$ are the same graphs, every $x \in R_1 \setminus A$ is in distance at most 2 to every $y \in R$ in $G[R_2]$.
		This satisfies item-(\ref{prop-forgotten-dist-2-part-1}).
		Next, observe that if $x \in R \setminus A$ is part of a triangle $T$ in $G[R]$, then that triangle is also part of $G[R_2]$ and vice versa.
		Since $x \in R_2 \setminus A$ is part of at least $r$ triangles in $G[R]$, $x$ is also part of at least $r$ triangles in $G[R_2]$.
		It ensures item-(\ref{prop-number-of-triangles-forgotten}).
		Finally, consider any $B \in \cA_2$.
		Note that $\cA_2 = \cA$ and there is $x \in R \setminus A$ such that $N(x) \cap A = B$.
		As $R_2 = R$, for every $B \in \cA_2$, there is $x \in R_2 \setminus A$ such that $N(x) \cap A = B$, satisfying item-(\ref{prop-subset-information-forgot-vertex}).
		Hence, $R_2$ is a candidate solution for $t_2$ with respect to $\gamma_2(A) = (f_2, \cA_2)$.
				
		\item[Case-(c)] $A = R_2 \subset R_1$.
		In this case, observe that every $x \in R \setminus A$ is in $R_1 \setminus A$.
		We assign $\cA_1 = \cA$ and $\cA_2 = \emptyset$.
		Define $f_1(u) = f(u)$ and $f_2(u) = |E(N(u) \cap A)|$.
		The arguments are symmetric to the Case-(b) above.
		
		\item[Case-(d)] $A \subset R_1$ and $A \subset R_2$.
		Observe that $R_2 \cap X_t = R_1 \cap X_t$, hence satisfying item-(\ref{prop-bag-intersection}) for both $R_1$ and $R_2$.
		
		We construct $\gamma_1(A) = (f_1, \cA_1)$ and $\gamma_2(A) = (f_2, \cA_2)$ as follows.
		% once in $f_1(v)$ and once in $f_2(v)$.
		
		We define $f_1: A \rightarrow [r] \cup \{0\}$ and $f_2: A \rightarrow [r] \cup \{0\}$ as follows.
		For every $u \in A$, we focus on the number of triangles in $G[R_1]$ that $u$ is part of.
		We assign that value (subject to a maximum of $r$) to $f_1(u)$.
		Similarly, we consider the number of triangles in $G[R_2]$ that $u$ is part of.
		We assign that value (subject to a maximum of $r$) to $f_2(u)$.
		Clearly, both $f_1$ and $f_2$ satisfy item-(\ref{prop-number-of-triangles-intersection}) for $R_1$ and $R_2$, respectively.
		
		Let us justify that item-(\ref{join-cond-2}) is satisfied for $f_1$ and $f_2$.
		Consider any $v \in A$ and any arbitrary triangle $T_v$ in $G[R]$ that $v$ is part of.
		If $T_v \subseteq A$, then $T_v$ is part of both $G[R_1]$ and $G[R_2]$.	
		The number of such triangles $T_v \subseteq A$ is precisely the same as the number of edges in  $E(N(v) \cap A)$.
		Hence, number of triangles $T_v$ with $T_v \subseteq A$, i.e. $|E(N(v) \cap A)|$ is counted both in $f_1(v)$ and $f_2(v)$.
		
		If $T_v \cap (R_1 \setminus A) \neq \emptyset$, then observe that $T_v$ contains a vertex $x$ that is not adjacent to any vertex in $R_2 \setminus A$.
		Hence, $T_v$ is part of $G[R_1]$ but not part of $G[R_2]$, hence only counted in $f_1(v)$ but not in $f_2(v)$.
		% and are counted as part of $f_1(v)$.
		Similarly, if $T_v \cap (R_2 \setminus A) \neq \emptyset$, then observe that $T_v$ is part of $G[R_2]$ but not $G[R_1]$.
		Hence, $T_v$ that is only part of $G[R_2]$ but not of $G[R_1]$ is only counted in $f_2(v)$.
		Therefore, $f(v) = f_1(v) + f_2(v) - |E(N(v) \cap A)|$ subject to a maximum of $r$, ensuring us that item-(\ref{join-cond-2}) is satisfied.
		% and are counted as part of $f_2(v)$.
		%Therefore, observe that only the triangles $T_v \subseteq A$ that $v \in A$ is part of are counted twice.
		
		Now, we decide $\cA_1$ and $\cA_2$.
		Note that if a set $B \in \cA$ then there exists $w \in R \setminus A$ such that $N(w) \cap A = B$.
		%We choose two sub-collections $\cA_1, \cA_2 \subseteq \cA$ such that $\cA_1 \cup \cA_2 = \cA$ as follows.
		%
		We consider every $x \in R_1 \setminus A$ and every $y \in R_2 \setminus A$ one by one.
		Note that both $B_x = N_{G_t}(x) \cap A$ and $B_y = N_{G_t}(y) \cap A$ are in $\cA$.
		%%\todo[inline]{Diptapriyo: my edit has to resume here.}
		We put $B_x$ into $\cA_1$ and $B_y$ into $\cA_2$.
		Hence, every set of $\cA$ is in $\cA_1$ or in $\cA_2$, or both.
		Therefore, $\cA = \cA_1 \cup \cA_2$, satisfying item-(\ref{join-cond-1}).
		
		Consider $P, Q \in \cA$ such that $P \cap Q = \emptyset$.
		Then, observe that if $w, z \in R \setminus A$ such that $N(w) \cap A = P$ and $N(z) \cap A = Q$, then it cannot be possible that $w \in R_1 \setminus A$ and $z \in R_2 \setminus A$.
		If $w \in R_1 \setminus A$ and $z \in R_2 \setminus A$, then $w$ and $z$ have no common neighbor in $R$ and they cannot be at distance 2 with each other.
		This contradicts that $R$ is a candidate solution for $t$ with respect to $\tau(A)$.
		Hence, it must be that both $w, z \in R_1 \setminus A$ or both $w, z \in R_2 \setminus A$.
		It implies that either $P, Q \in \cA_1$ or $P, Q \in \cA_2$, completing the justification of item-(\ref{join-cond-3}).
		We have justified all conditions of Definition \ref{defn:join-A-compatible}, implying that $\tau(A)$ is join-$A$-compatible with $(\gamma_1(A), \gamma_2(A))$. 
		
		By construction of $\cA_1$ and $\cA_2$, note that $B \in \cA_1$ (respectively, $B \in \cA_2$) if and only if there is $x \in R_1 \setminus A$ (respectively, $x \in R_2 \setminus A$) such that $N(x) \cap A = B$.
		Hence, item-(\ref{prop-subset-information-forgot-vertex}) are satisfied.
		
		Now, we consider any vertex $x \in R_1 \setminus A$.
		Since $x \in R \setminus A$, $x$ particpates in at least $r$ triangles in $G[R]$.
		Each such triangle are in $G[R_1]$, since $x$ has no neighbor in $R_2 \setminus A$.
		Hence, $x \in R_1 \setminus A$ is part of at least $r$ triangles in $G[R_1]$.
		Similarly, we can prove that any $x \in R_2 \setminus A$ is part of at least $r$ triangles in $G[R_2]$ ensuring item-(\ref{prop-number-of-triangles-forgotten}) of Definition \ref{defn:tw-dp-tuple-candidate-soln}.
		
		Consider any $x \in R_1 \setminus A$ and $y \in R_1$.
		Since $x \in R \setminus A$ and $y \in R$, the distance between $x$ and $y$ is at most 2 in $G[R]$.
		But given that $x \in R_1 \setminus A$, hence $x$ cannot have any neighbor in $R_2 \setminus A$.
		And there is a common neighbor $z$ between $x$ and $y$, that is not in $R_2 \setminus A$, implying that $z$ must be in $R_1$.
		Hence, the distance between $x$ and $y$ is at most 2 in $G[R_1]$.
		By similar arguments, we can prove that between any $x \in R_2 \setminus A$ and $y \in R_2$, the distance is at most 2 in $G[R_2]$.
		This ensures that item-(\ref{prop-forgotten-dist-2-part-1}) is satisfied for $R_1$ and $R_2$ each.
		
		As we have established all conditions of Definition \ref{defn:tw-dp-tuple-candidate-soln}, $R_1$ is a candidate solution for $t_1$ with respect to $\gamma_1(A)$ and $R_2$ is a candidate solution for $t_2$ with respect to $\gamma_2(A)$.
	\end{description}
	
	%Without loss of generality, we can assume that $A \subset R_1$ and $A \subset R_2$.
	%In case $A = R_2$ (or $A = R_1$, respectively), then we can assign $\cA_2 = \emptyset$ and $\cA_1 = \cA$ (or $\cA_1 = \emptyset$ and $\cA_2 = \cA$, respectively).
	%When $A = R_2$ but 
	As the cases are mutually exhaustive, this concludes the proof.
	\end{proof}
	
	\medskip
	
	Now, we prove the converse of the above claim.
	
	\begin{nclaim}
	\label{claim:join-chid-case}
	Let $t$ be a join node with children $t_1, t_2$ and $A \subseteq X_t$.
	Consider tuple $\tau(A)$, $\gamma_1(A)$ and $\gamma_2(A)$ for $t, t_1$ and $t_2$ respectively such that $\tau(A)$ is join-$A$-compatible with $(\gamma_1(A), \gamma_2(A))$.
	Suppose that $R_1$ is a candidate solution for $t_1$ with respect to $\gamma_1(A)$ and $R_2$ is a candidate solution $t_2$ with respect to $\gamma_2(A)$.
	%Suppose that $R$ is a candidate solution for $t$ with respect to $\tau(A)$, $R_1 = R \cap V(G_{t_1})$ and $R_2 = R \cap V(G_{t_2})$.
	Then, $R_1 \cup R_2$ is a candidate solution for $t$ with respect to $\tau(A)$.
	\end{nclaim}
	
	\begin{proof}
	Suppose that the premise of the statement is true.
	We consider $R = R_1 \cup R_2$.
	Moreover, $R_1$ is a candidate solution for $t_1$ with respect to $\gamma_1(A) = (f_1, \cA_1)$ and $R_2$ is a candidate solution for $t_2$ with respect to $\gamma_2(A) = (f_2, \cA_1)$.
	As our assumption includes that $\tau(A) = (f, \cA)$ is join-$A$-compatible with $(\gamma_1(A), \gamma_2(A))$, it must be that for every $u \in A$, $f(u) = \min\{r, f_1(u) + f_2(u) + |E(N(u) \cap A)|$.
	Moreover, $\cA = \cA_1 \cup \cA_2$ and if there are two disjoint sets in $\cA$, then either both of them are in $\cA_1$ or both of them are in $\cA_2$.
	
	Concerning the set $R$, observe that $R \cap X_t = A$, hence item-(\ref{prop-bag-intersection}) of Definition \ref{defn:tw-dp-tuple-candidate-soln} is satisfied.
	Consider $x \in R \setminus A$ and $y \in R$.
	If $y \in A$, then there are two subcases, $x \in R_1 \setminus A$ or $x \in R_2 \setminus A$.
	If $x \in R_1 \setminus A$, then due to $R_1$ being a candidate solution for $t_1$ with respect to $\gamma_1(A)$, there is a path of length at most 2 between $x$ and $y$ in $G[R_1]$.
	As every edge of $G[R_1]$ is an edge of $G[R]$, therefore, the distance between $x$ and $y$ is at most 2 in this case.
	Similarly, if $x \in R_2 \setminus A$ and $y \in A$, then the argument is symmetric.
	In case, $x \in R_1 \setminus A$ and $y \in R_2 \setminus A$, then we note $B_x = N(x) \cap A$ and $B_y = N(y) \cap A$.
	Note that $B_x \in \cA_1$ and $B_y \in \cA_2$.
	But this is possible only when $B_x \cap B_y \neq \emptyset$.
	Therefore, $x$ and $y$ have a common neighbor in $A$ in the graph $G[R]$.
	Hence, the distance between $x$ and $y$ is at most 2.
	As we have covered all possible cases, item-(\ref{prop-forgotten-dist-2-part-1}) is satisfied for $R$.
	
	Now, we look at the number of triangles every $u \in A$ is part of.
	There are 3 types of triangles.
	First type are the ones when all vertices including $u$ of the triangles are in $A$.
	The number of such triangles is $|E(N(u) \cap A)|$.
	The second type is the number of triangles in $G[R_1]$ that $u$ is part of.
	It is $f_1(u)$ and it includes the number of triangles in with all endpoints in $A$.
	The third ytpe is the number of triangles in $G[R_2]$ that $u$ is part of.
	It is $f_2(u)$ that also includes the number of triangles with all 3 vertices in $A$.
	Hence, $|E(N(u) \cap A)|$ is counted twice, once in $f_1(u)$ and once in $f_2(u)$.
	As $f(u)$ is set to $f_1(u) + f_2(u) + |E(N(u) \cap A)|$ subject to a maximum of $r$, every $u \in A$ is part of $f(u)$ triangles in $G[R]$ that is correct, satisfying item-(\ref{prop-number-of-triangles-intersection}).
	
	Consider any triangle in $G[R]$ that $x \in R \setminus A$ is part of.
	If $x \in R_1 \setminus A$, then such a triangle is in $G[R_1]$, and there are $r$ such triangles in $G[R_1]$.
	Similarly, if $x \in R_2 \setminus A$, then such a triangle is in $G[R_2]$, and there are $r$ such triangles in $G[R_2]$.
	Since $G[R_1]$ and $G[R_2]$ both are subgraphs of $G[R]$, every $x \in R \setminus A$ is part of at least $r$ triangles in $G[R]$, satisfying item-(\ref{prop-number-of-triangles-forgotten}).
	
	Finally, consider any $B \in \cA$.
	Since $\cA = \cA_1 \cup \cA_2$, it must be that $B \in \cA_1$ or $B \in \cA_2$.
	If $B \in \cA_1$, then there is $x \in R_1 \setminus A$ such that $N(x) \cap A = B$.
	The same vertex $x \in R \setminus A$.
	Similarly, if $B \in \cA_2$, we can argue that there is $x \in R \setminus A$ such that $N(x) \cap A = B \in \cA$.
	Hence, item-(\ref{prop-subset-information-forgot-vertex}) is satisfied.
	As $R$ satisfies all conditions of Definition \ref{defn:tw-dp-tuple-candidate-soln}, this completes the proof.
	\end{proof}
	
	\medskip
	
	Let $t$ be a join node with two children $t_1$ and $t_2$, $A \subseteq X_t$.
	Considering the tuple $\tau(A)$ for $t$, in order to compute ${\dptw}[t, A, \tau(A)]$, first we enumerate all possible functions $f_1, f_2: A \rightarrow [r] \cup \{0\}$ such that for every $u \in A$, $f(u) = \min\{r, f_1(u) + f_2(u) + |E(N(u) \cap A)|\}$ is satisfied.
	Observe that there are $(r+1)^{2|A|}$ possible such combinations of functions that need to be checked for this.
	Subsequently, we enumerate $\cA_1, \cA_2 \subseteq \cA$ such that $\cA_1 \cup \cA_2 = \cA$.
	There are $4^{|\cA|}$ such choices of $(\cA_1, \cA_2)$.
	Now, for each such combination $\gamma_1(A) = (f_1, \cA_1)$ and $\gamma_2(A) = (f_2, \cA_2)$, we check whether $\tau(A)$ is join-$A$-compatible with $(\gamma_1(A), \gamma_2(A))$ or not.
	This can be checked in polynomial-time.
	Subsequently, after each such join-$A$-compatible tuple pair $(\gamma_1(A), \gamma_2(A))$ is enumerate, we calculate the values accordingly.
	Since $4^{|\cA|} \leq 2^{2^{\OO(k)}}$, that this procedure takes $2^{2^{\OO(k)}}r^{\OO(k)}n^{\OO(1)}$-time.
	
	Now, we establish the correctness of the recurrence.
	Consider any optimal candidate solution $R$ of $t$ with respect to $\tau(A)$.
	Due to Claim \ref{claim:join-parent-case}, there are tuples $\gamma_1(A)$ for $t_1$ and $\gamma_2(A)$ for $t_2$ such that $\tau(A)$ is join-$A$-compatible with $(\gamma_1(A), \gamma_2(A))$.
	Furthermore, $R_1$ is a candidate solution for $t_1$ with respect to $\gamma_1(A)$, and $R_2$ is a candidate solution for $t_2$ with respect to $\gamma_2(A)$.
	Then, $|R_1| \leq {\dptw}[t_1, A, \gamma_1(A)]$ and $|R_2| \leq {\dptw}[t_2, A, \gamma_2(A)]$.
	As $A = R_1 \cap R_2$, therefore $${\dptw}[t, A, \tau(A)]$$
	$$\leq \max\limits_{(\gamma_1(A), \gamma_2(A))\in \XX(\tau(A))} \bigg({\dptw}[t_1, A, \gamma_1(A)] + {\dptw}[t_2, A, \gamma_2(A)]\bigg) - |A|$$
	
	Conversely, let $R_1$ is a candidate optimal solution for $t_1$ with respect to $\gamma_1(A)$, and $R_2$ be a candidate optimal solution for $t_2$ with respect to $\gamma_2(A)$.
	Moreover, $\tau(A)$ is join-$A$-compatible with $(\gamma_1(A), \gamma_2(A))$.
	Then, due to Claim \ref{claim:join-chid-case}, $R$ is a candidate solution for $t$ with respect to $\tau(A)$ and $|R| = |R_1| + |R_2| - |A|$.
	As $|R| \leq {\dptw}[t, A, \tau(A)]$, it follows that
	$${\dptw}[t, A, \tau(A)]$$
	$$\geq \max\limits_{(\gamma_1(A), \gamma_2(A))\in \XX(\tau(A))} \bigg({\dptw}[t_1, A, \gamma_1(A)] + {\dptw}[t_2, A, \gamma_2(A)]\bigg) - |A|$$
	
	Hence, this recurrence is correct.
	\end{proof}
	
	We have established the recurrence for each of the node types and their correctness.
\end{description}
%%%%%
Observe from Claims \ref{claim:intro-node-correctness}, \ref{claim:correctness-forget-node}, and \ref{claim:join-node-recurrence} that computation of every table entry takes at most $2^{2^{\OO(k)}}r^{\OO(k)}n^{\OO(1)}$-time.
The number of table entries in every specific node is at most $2^{k}2^{2^{\OO(k)}}r^{\OO(k)}$.
%As $2^{2^{\OO(k)}}$ subsumes $2^k$,
Hence the collection of all table entries can be computed in $2^{2^{\OO(k)}}r^{\OO(k)}n^{\OO(1)}$-time.
\end{proof}

%\noindent
We use %the above lemma and 
the following proposition from \cite{HartungKN15} to prove Theorem \ref{thm:triangle-tw-dp}.

\begin{proposition}[Lemma 5 of \cite{HartungKN15}]
\label{prop:subset-tree-decomposition-property}
Let $G = (V, E)$ be an undirected graph and $S \subseteq V(G)$.
Then, for any tree decomposition $\cT = (T, \{X_t\}_{t \in V(T)})$ of $G$, there is at least one vertex $v \in S$ such that there is a bag that contains $N[v] \cap S$.
\end{proposition}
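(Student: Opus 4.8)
The plan is to prove this via a standard ``lowest top'' argument on a rooted version of the tree decomposition. First I would fix an arbitrary root of $T$, which endows $V(T)$ with an ancestor relation and a depth function. For each $v \in S$, recall that $T_v = \{t \in V(T) : v \in X_t\}$ induces a connected subtree of $T$ by condition (iii) of Definition~\ref{defn:treewidth}; let $t_v$ be the unique node of $T_v$ that is closest to the root (the ``top'' of $T_v$), which is well defined precisely because $T_v$ is connected. The candidate vertex is then $v^\ast \in S$ chosen so that the depth of $t_{v^\ast}$ is maximum among all $\{t_v : v \in S\}$, and the claim to be established is that the single bag $X_{t_{v^\ast}}$ already contains $N[v^\ast] \cap S$.

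To verify the claim, note that $v^\ast \in X_{t_{v^\ast}}$ by definition, so it suffices to handle an arbitrary neighbour $u \in N(v^\ast) \cap S$. Since $u v^\ast \in E(G)$, condition (ii) gives a bag containing both, so $T_u \cap T_{v^\ast} \neq \emptyset$; fix a node $w$ in this intersection. Because $w$ lies in both subtrees and the top of a subtree containing a node must be an ancestor of that node, both $t_u$ and $t_{v^\ast}$ are ancestors of $w$, hence lie on the single root-to-$w$ path and are therefore comparable. The maximality of $\mathrm{depth}(t_{v^\ast})$ forces $\mathrm{depth}(t_u) \le \mathrm{depth}(t_{v^\ast})$, so $t_u$ is a (weak) ancestor of $t_{v^\ast}$. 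Consequently $t_{v^\ast}$ lies on the tree-path between $t_u$ and $w$; as both endpoints of that path belong to the connected subtree $T_u$, the whole path — and in particular $t_{v^\ast}$ — lies in $T_u$, i.e.\ $u \in X_{t_{v^\ast}}$. This shows that every vertex of $N[v^\ast] \cap S$ sits in $X_{t_{v^\ast}}$, as required.

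The main obstacle is the comparability step: one must argue that the two tops $t_u$ and $t_{v^\ast}$ are genuinely comparable in the rooted tree and then translate the depth inequality into the correct ancestor direction. This hinges on the observation that if a connected subtree contains a node $w$, then its top is an ancestor of $w$; once that and the uniqueness of paths in a tree are in hand, connectivity of $T_u$ closes the argument without further case analysis. Note that the choice of $v^\ast$ as the vertex with the \emph{deepest} top is essential — it is exactly what rules out a neighbour whose subtree fails to reach down to $t_{v^\ast}$.
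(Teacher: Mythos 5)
Your proof is correct. Note that the paper itself does not prove this proposition at all --- it imports it as Lemma~5 of \cite{HartungKN15} --- so there is no in-paper argument to compare against; your ``deepest top'' argument (root $T$, take for each $v \in S$ the shallowest node $t_v$ of the connected subtree $T_v$, pick $v^\ast \in S$ with $t_{v^\ast}$ of maximum depth) is the standard and essentially canonical proof of this fact, matching the style of the cited source. All the delicate steps check out: the top of a connected subtree is well defined and is an ancestor of every node of that subtree; both tops are ancestors of the witness bag $w$ for the edge $uv^\ast$ and hence comparable; and you correctly restrict the maximality comparison to $u \in N(v^\ast) \cap S$ (this is exactly where membership of $u$ in $S$ is needed --- the conclusion would fail for neighbours outside $S$). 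One trivial remark: the statement implicitly assumes $S \neq \emptyset$, which your choice of $v^\ast$ also requires, so nothing is lost.
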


%\noindent
%We use Lemma \ref{lemma:treewidh-DP-table} and the above proposition to prove Theorem \ref{thm:triangle-tw-dp} as follows.
%Now, we use the above lemma and  state the following proposition that we use to prove our final theorem statement.

{\ThmTwDPTriangle*}

\begin{proof}
%Observe that any optimal solution for our problem must satisfy 
The first thing to observe is that every vertex is part of at least $r$ triangles in any optimal solution.
Hence, %as an initial preprocessing, 
we delete all vertices of $G$ that are part of less than $r$ triangles in $G$, and hence assume that every vertex of $G$ is part of at least $r$ triangles in $G$.
As treewidth is minor-closed, we invoke the algorithm by Korhonen \cite{Korhonen21} that runs in $2^{\OO({\trw})}n^{\OO(1)}$-time and computes a tree decomposition of width $k \leq 2{\trw} + 1$.
We convert it into a nice tree decomposition $\cT = (T, \{X_t\}_{t \in V(T)})$ of with $k$ in polynomial-time (see \cite{Kloks94}). 
%%%%
Fix a hypothetical $2$-club $S$ of maximum cardinality that satisfies {\vtrangle}.
Due to Proposition \ref{prop:subset-tree-decomposition-property}, there is $w \in S$ such that $N_G[w] \cap S \subseteq X$ for a bag $X$ in $\cT$.
We consider $A_w = N_G[w] \cap S \subseteq X$.
Since the number of bags in $\cT$ is $\OO(n)$, we can first guess a bag $X_p$ (corresponding to node $p \in V(T)$), a vertex $w$ part of the potential maximum sized 2-club with {\vtrangle}, and $A_w$, the neighbors of $w$ in $X_p$.
After having done this, we set the node $p$ as the root of $\cT$.
%\todo[inline]{Diptapriyo: can we assume the next sentence?}
Furthermore, we may assume that $A_w = X_p$.
In case $A_w \subset X_p$, we can create a sequence of forget nodes that starts in $X_p$ and deletes all vertices in $X_p \setminus A_w$ one by one.

Next, we consider the nice tree decomposition $\cT = (T, \{X_t\}_{t \in V(T)})$ with root $p$, and invoke Lemma \ref{lemma:treewidh-DP-table} and compute ${\dptw}[p, X_p, \tau(X_p)]$.
We choose $\tau(X_p) = (f, \cA)$ such that for all $u \in X_p$, $f(u) = r$, and for every $x, y \in X_p$, $xy \in E(G)$ or $x, y \in B$ for some $B \in \cA$.
This ensures us that we compute a potential solution of largest size that contains $X_p$.
%\todo[inline]{Diptapriyo: Explain that distance 2 criteria is satisfied for every pair of vertices of $A_w (= X_p)$. In the previous sentence, we can write ${\dptw}[p, X_p, \tau(X_p)]$ instead of ${\dptw}[p, A_w, \tau(A_w)]$}
% and $\cA$ contains all subsets of $A_w$.
The number of subproblems in every bag is $2^{2^{\OO(k)}}r^{\OO(k)}$, and due to Lemma \ref{lemma:treewidh-DP-table}, computing the subproblems take $2^{2^{\OO(k)}}r^{\OO(k)}n^{\OO(1)}$-time, the algorithm runs in $2^{2^{\OO(k)}}r^{\OO(k)}n^{\OO(1)}$-time.
%%%%
Since $k \leq 2{\trw} + 1$, for any potential 2-club $S$ of $G$ satisfying {\vtrangle}, the treewidth of $G[S]$ is at most $2{\trw} + 1$.
Note that the degeneracy of a graph is bounded by its treewidth.
Thus, the degeneracy of $G[S]$ is at most  $2{\trw} + 1$.
Therefore, in $G[S]$, there is a vertex $u$ with degree at most $2{\trw} + 1$.
Then, $u$ is part of at most ${{2{\trw} + 1}\choose{2}}$ triangles in $G[S]$.
It implies that $r \leq {{2{\trw} + 1}\choose{2}}$ which is $\OO({\trw}^2)$.
As $r$ is $\OO({\trw}^2)$ and $k$ is $\OO({\trw})$, it follows that the algorithm runs in $2^{2^{\OO({\trw})}} {\trw}^{\OO({\trw})} n^{\OO(1)}$-time.
Therefore, the claimed running time follows.
\end{proof}

%\subsection{Lower Bound}
%
%Now, we provide the proof of our lower bound result.
%
%{\ThmVCFPTLowerBound*}
%
%\input{vc-fpt-lower-bound-new.tex}

\section{Parameterized by the feedback edge number and $h$-index}
\label{sec:fes-result}
%\label{sec:other-params-short}

%%%feedback edge set result
In this section, we provide our results when parameterized by the feedback edge number (${\fes}$) as well as the (${\hind}$) of the input graph.

\subsection{Polynomial kernel parameterized by the feedback edge number}

Given an undirected graph $G$, we can compute a minimum feedback edge set $F \subseteq E(G)$ in $\OO(m+n)$-time by computing a spanning forest of the graph.
Consider an instance $(G, \ell)$ of {\trangle} {\twoclub} problem such that $F \subseteq E(G)$ is a minimum feedback edge set of $G$ of size $k$.
Let $T$ be the corresponding spanning forest of $G$ such that no edge of $F$ is in $T$.
Note that the number of endpoints in $F$ is at most $2{\fes}(G)$.
Let $k = {\fes}(G)$ be the parameter, $D$ denote the endpoints of the edges of $F$, and $Y = V(G) \setminus D$.
Therefore, $|D| \leq 2{\fes}(G) = 2k$. We now provide a linear kernel for {\trangle} {\twoclub} when parameterized by ${\fes}$, the feedback edge number of the input graph.
In-fact, we prove something stronger, that {\trangle} {\sclub} admits a linear kernel when parameterized by ${\fes}$.

%\subsection{Result for {\trangle} {\sclub}}
%\label{sec:tringle-2-club-fes}

Since any cycle of $G$ intersects at least one edge of $F$, for any vertex $v \in Y$, if $v$ is part of a triangle, then that triangle must contain at least one edge from $F$.
We prove the following structural lemma that is useful to prove our result.

\begin{lemma}
\label{lemma:fes-vertex-triangle-part}
Let $(G, \ell)$ be an instance of {\trangle} {\sclub} and $D$ denote the endpoints of the edges present in a minimum feedback edge set.
Then, the following statements hold true.
\begin{enumerate}[(i)]
	\item\label{fes-vertex-no-triangle} any vertex $v \in V(G) \setminus D$ is part of at most ${\fes}(G)$ triangles in $G$, and
	\item\label{fes-triangle-s-club-relation} if $r > {\fes}(G)$, then every vertex $r$-triangle $s$-club $S$ of $G$ has at most $2{\fes}(G)$ vertices and $S \subseteq D$. 
\end{enumerate} 
\end{lemma}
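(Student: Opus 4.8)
The plan is to establish both statements from the single structural fact already highlighted before the lemma: since $G - F$ is a forest, every triangle of $G$ is a cycle and hence must contain at least one edge of the feedback edge set $F$, and every edge of $F$ has both endpoints in $D$.

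For part (\ref{fes-vertex-no-triangle}), I would fix $v \in V(G) \setminus D = Y$ and consider an arbitrary triangle $\{v, a, b\}$ through $v$. Because $v$ is not an endpoint of any edge of $F$, neither edge $va$ nor $vb$ can lie in $F$; so the $F$-edge forced by the observation above must be the third edge $ab$ joining the two neighbours of $v$ in the triangle. The key point is then an injectivity argument: the whole triangle $\{v,a,b\}$ is reconstructed from $v$ together with this edge $ab \in F$, so the assignment sending each triangle through $v$ to its (unique) $F$-edge opposite $v$ is injective into $F$. This bounds the number of triangles containing $v$ by $|F| = \fes(G)$.

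For part (\ref{fes-triangle-s-club-relation}), I would assume $r > \fes(G)$ and let $S$ be a vertex $r$-triangle $s$-club. Since $G[S]$ is a subgraph of $G$, every triangle of $G[S]$ is also a triangle of $G$, so each $v \in S$ participates in at least $r$ triangles of $G$. If some $v \in S$ lay in $Y$, part (\ref{fes-vertex-no-triangle}) would give that $v$ is in at most $\fes(G) < r$ triangles of $G$, contradicting the $r$-triangle requirement. Hence no vertex of $S$ lies in $Y$, i.e. $S \subseteq D$, and since $|D| \le 2\fes(G)$ we conclude $|S| \le 2\fes(G)$.

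Neither step presents a genuine obstacle; the only point deserving care is the injectivity claim in part (\ref{fes-vertex-no-triangle}) — specifically, verifying that the forced $F$-edge is the edge $ab$ \emph{opposite} $v$ rather than one incident to $v$ (which holds exactly because $v \notin D$), and that this edge together with $v$ determines the triangle. Once this is in place, part (\ref{fes-triangle-s-club-relation}) follows immediately from part (\ref{fes-vertex-no-triangle}) and the hypothesis $r > \fes(G)$ by a one-line contradiction.
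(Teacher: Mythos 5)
Your proof is correct and takes essentially the same route as the paper's: in part (\ref{fes-vertex-no-triangle}) both arguments observe that, since $v \notin D$, the mandatory $F$-edge of any triangle through $v$ must be the edge opposite $v$, yielding an injection from triangles through $v$ into $F$ and hence the bound ${\fes}(G)$; part (\ref{fes-triangle-s-club-relation}) is then the same one-line contradiction in both. If anything, your phrasing of the injectivity step (the triangle is reconstructed from $v$ and its opposite $F$-edge) is a slightly cleaner statement of what the paper argues via shared edges of two triangles, and you make explicit the step that triangles of $G[S]$ are triangles of $G$, which the paper leaves implicit.
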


\begin{proof}
Consider an instance $(G, \ell)$ of {\trangle} {\sclub} such that $k = {\fes}(G)$ and $D$ be the endpoints of a minimum feedback edge set.
We prove the items in the given order.
\begin{enumerate}[(i)]
	\item As $F$ is a (minimum) feedback edge set of $G$, any cycle of $G$ intersects at least one edge in $F$.
	Therefore, any triangle of $G$ is intersected by $F$.
	Then, any $v \in V(G) \setminus D$ that is part of a triangle must use at least one edge from $F$.
	If $v$ is part of two distinct triangles $B_1, B_2$ of $G$, then $B_1$ and $B_2$ can have at most one edge $xy$ in common.
	Since $v$ is part of both $B_1$ and $B_2$, the common edge must also contain $v$.
	As $v$ is not an endpoint of any edge of $F$, the edge of $F$ present in $B_1$ and the edge of $F$ present in $B_2$ are different.
	Therefore, there does not exist any triangle of $G$ that contains $v$ and uses two edges of $F$.
	Hence, every $v \in V(G) \setminus D$ is part of at most ${\fes}(G)$ triangles, completing the proof of the first part of the statement.

	\item
	To prove the second part, we use the first item.
	From item (\ref{fes-vertex-no-triangle}), we have that every vertex of $V(G) \setminus D$ is part of at most $r$ triangles in $G$.
	If $r > {\fes}(G)$, and $S$ be a vertex $r$-triangle $s$-club of $G$, then it is clear that $S$ cannot contain any vertex from $V(G) \setminus D$.
	Therefore, $S \subseteq D$, implying that $|S| \leq |D| \leq 2{\fes}(G)$.
\end{enumerate}
This completes the proof of the lemma.
\end{proof}

We can assume without loss of generality that every vertex is part of at least one triangle.
Consider a feedback edge $uv \in F$.
As $G - F$ is a forest, there is a unique path $P_{u, v}$ between $u$ and $v$ in $G - F$;
%In $G - F$, 
we call this path %$P_{u, v}$ 
a {\em feedback edge path} between $u$ and $v$ with respect to $F$.
If $w$ is in the path $P_{u, v}$, then $uv$ is a {\em spanning feedback edge} of $w$.
%In short, we also say that the feedback edge $uv$ {\em spans} the vertex $w$.
If $w$ is in the path $P_{u, v}$ and $u, v \in N_G(w)$, then $w$ is {\em satisfied} by the feedback edge $uv$.
Equivalently, we say that the feedback edge $uv$ {\em satisfies} the vertex $w$.
%If a vertex $w$ is satisfied by a feedback edge $uv$ and $\{u, v, w\}$ forms a triangle in $G$, then $w$ is {\em satisfied by $uv$ with a triangle in $G$}.
%Since every vertex of the graph is part of at least one triangle, we have the following lemma.

\begin{lemma}
\label{lemma:spanning-feedback-edge-triangle-constraint}
Suppose that every vertex of $G$ is part of a triangle. Let $F$ be a (minimum) feedback edge set of $G$, and $D = V(F)$.
For every $w \in V(G) \setminus D$ and every feedback edge $uv \in F$, $w$ is satisfied by a feedback edge $uv$ if and only if $\{w, u, v\}$ induces a triangle in $G$.
\end{lemma}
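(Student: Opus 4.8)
The plan is to prove the two implications separately, observing that the forward direction is immediate from unfolding the definition of ``satisfied'' while the backward direction is where the hypothesis $w \notin D$ does the real work. Throughout I write $T = G - F$ for the spanning forest, so that $P_{u,v}$ is the unique $u$--$v$ path in $T$.

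For the forward direction, suppose $w$ is satisfied by the feedback edge $uv$. By definition this means $w$ lies on $P_{u,v}$ and $u, v \in N_G(w)$, so $wu, wv \in E(G)$. Since $uv \in F \subseteq E(G)$ as well, the three vertices $w, u, v$ are pairwise adjacent, i.e.\ $\{w, u, v\}$ induces a triangle in $G$. This requires nothing beyond the definitions, and in particular does not use that every vertex lies in a triangle.

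For the backward direction, suppose $\{w, u, v\}$ induces a triangle. Then $u, v \in N_G(w)$, which already gives the adjacency half of the definition of ``satisfied'', so it remains to show $w \in P_{u,v}$, i.e.\ that $w$ lies on the unique $u$--$v$ path in $T$. Here I would invoke the hypothesis $w \in V(G) \setminus D$: since $w$ is not an endpoint of any edge of $F$, neither $wu$ nor $wv$ can belong to $F$, hence both are edges of $T$. Consequently $u, w, v$ form a path of length two inside $T$, placing all three vertices in a common tree of the forest. Because the path between two vertices of a forest is unique, the $u$--$v$ path $P_{u,v}$ must be exactly this length-two path through $w$, and in particular $w \in P_{u,v}$. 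Together with $u, v \in N_G(w)$, this is precisely the statement that $w$ is satisfied by $uv$.

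The only delicate point — and the step I would be most careful to state explicitly — is the claim that $w \notin D$ forces both $wu$ and $wv$ to be tree edges. This is exactly what excludes the otherwise conceivable situation in which $w$ is adjacent to $u$ and to $v$ yet some \emph{other} vertex sits on $P_{u,v}$ while $w$ reaches them via a different route; ruling this out hinges on the fact that $D = V(F)$ contains every endpoint of a feedback edge. Once the two incident edges are known to lie in $T$, uniqueness of forest paths closes the argument with no further computation.
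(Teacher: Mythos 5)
Your proof is correct and takes essentially the same route as the paper's: both hinge on the observation that $w \notin D$ forces $wu, wv \in E(G-F)$, so that $(u,w,v)$ is a length-two path in the forest and uniqueness of forest paths yields $P_{u,v} = (u,w,v)$, while the forward direction is pure definition-unfolding. Your aside that the ``every vertex is in a triangle'' hypothesis is never needed is also accurate --- the paper's proof does not use it either.
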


\begin{proof}
Since $F$ is a feedback edge set of $G$, there is a unique path between every pair of vertices of $F$ in $G-F$.
%Consider an arbitrary vertex $w \in V(G) \setminus D$ and an arbitrary feedback edge $uv \in F$.

For the backward direction ($\Leftarrow$), let $\{w, u, v\}$ induces a triangle in $G$.
As $uv \in F$, in the graph $G - F$, there is a unique path $P_{u, v}$ between $u$ and $v$ in $G - F$.
Since $w \notin D$, $wu, wv \notin F$.
Then, $(u, w, v)$ is a path between $u$ and $v$ in $G - F$, and due to the uniqueness of $P_{u, v}$, it must be that $P_{u,v} = (u, w, v)$ itself.
%Since $w$ is part of a triangle $B$ in $G$, that triangle $B$ contains a feedback edge $uv \in F$.
Thus, $w$ is satisfied by the feedback edge $uv$ and $u, v \in N_G(w)$.

For the forward direction ($\Rightarrow$), suppose that $uv$ is a feedback edge of $G$, and $w \in V(G) \setminus D$ is satisfied by $uv$.
Then, $w$ is in the unique path $P_{u, v}$ between $u$ and $v$ in $G - F$ and $u, v \in N_G(w)$.
Concerning $\{u, v, w\}$, observe that $uv, vw, uw \in E(G)$.
Hence $\{u, v, w\}$ forms a triangle of $G$, this completes the proof of the lemma.
\end{proof}

The following observation holds true due to the property of a feedback edge set.

\begin{observation}
\label{obs:feedback-edge-satisfies-unique-vertex}
Let $F$ be a feedback edge set of $G$, $uv \in F$, and $D = V(F)$.
Then, there is at most one vertex $w \in V(G) \setminus D$ that is satisfied by $uv$.
\end{observation}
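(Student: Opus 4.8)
The plan is to derive the statement by contradiction, leveraging two facts already in hand: that $G - F$ is a forest and hence contains a \emph{unique} path between any two of its vertices, and the characterization of Lemma~\ref{lemma:spanning-feedback-edge-triangle-constraint}, which tells us that a vertex $w \in V(G) \setminus D$ is satisfied by $uv$ precisely when $\{w, u, v\}$ induces a triangle in $G$. So the content of the observation is really a statement about how many triangles of this special form a single feedback edge can close.

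Concretely, I would suppose toward a contradiction that two \emph{distinct} vertices $w_1, w_2 \in V(G) \setminus D$ are both satisfied by the feedback edge $uv$. Applying Lemma~\ref{lemma:spanning-feedback-edge-triangle-constraint} to each of them, both $\{w_1, u, v\}$ and $\{w_2, u, v\}$ induce triangles in $G$; in particular $u, v \in N_G(w_1) \cap N_G(w_2)$, so both $w_1 u, w_1 v$ and $w_2 u, w_2 v$ are edges of $G$.

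The key step is to observe that the two length-$2$ walks $(u, w_1, v)$ and $(u, w_2, v)$ actually live inside $G - F$. Indeed, since $w_1, w_2 \notin D = V(F)$ and every edge of $F$ has both of its endpoints in $D$, none of the edges $w_1 u$, $w_1 v$, $w_2 u$, $w_2 v$ can belong to $F$. Hence both walks use only edges of $G - F$, and because $w_1 \neq w_2$ they constitute two genuinely distinct paths between $u$ and $v$ in $G - F$. This contradicts the uniqueness of the path between $u$ and $v$ in the forest $G - F$, and therefore at most one vertex of $V(G) \setminus D$ can be satisfied by $uv$.

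I do not expect any real obstacle here: once the triangle characterization of Lemma~\ref{lemma:spanning-feedback-edge-triangle-constraint} is invoked, the argument collapses to the elementary fact that a forest has at most one path joining two vertices. The only point requiring a moment's care is confirming that the edges incident to $w_1$ and $w_2$ avoid $F$, which is immediate from $w_1, w_2 \notin D$; everything else is bookkeeping.
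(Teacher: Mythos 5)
Your proposal is correct and follows essentially the same route as the paper: both invoke Lemma~\ref{lemma:spanning-feedback-edge-triangle-constraint} to turn ``satisfied by $uv$'' into triangles $\{u,v,w_1\}$ and $\{u,v,w_2\}$, note that the four edges incident to $w_1,w_2$ avoid $F$ because $w_1,w_2\notin D$, and derive a contradiction with $G-F$ being a forest. The only cosmetic difference is that you phrase the contradiction as two distinct $u$--$v$ paths in $G-F$, while the paper phrases it as the vertices $u,w_1,v,w_2$ forming a $C_4$ in $G-F$ --- the same fact in different clothing.
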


\begin{proof}
Let $F$ be a feedback edge set of $G$, $uv \in F$, and $D = V(F)$.
It suffices to prove that in case there is $w \in V(G) \setminus D$ that is satisfied by $uv$, then $w$ is unique.
Consider any $w \in V(G) \setminus D$.
If there are two distinct vertices $w, w' \in V(G) \setminus D$ such that both $w$ and $w'$ are satisfied by the feedback edge $uv$, then due to Lemma \ref{lemma:spanning-feedback-edge-triangle-constraint}, $\{u, v, w\}$ and $\{u, v, w'\}$ in $G$.
Then the vertices $u, v, w, w'$ form a cycle $C_4$ as a subgraph in $G - F$ without the feedback edge $uv$.
This is a contradiction to the assumption that $F$ is a feedback edge set of $G$.
Hence, the statement is true.
\end{proof}

Using the above observation and lemmas, we can prove the main result of this section.

{\ThmFESTriangle*}

\begin{proof}
Let $(G, \ell)$ be an instance of a {\trangle} {\twoclub} in which we have exhaustively deleted every vertex of the graph that does not participate in any triangle.
Let $F \subseteq E(G)$ be a minimum feedback edge set of $G$ and $D = V(F)$.
We use $Y = V(G) \setminus D$.
Observe that $|D| \leq 2{\fes}$.
If we can prove that $|Y| \leq {\fes} + 1$, then we are done.

\begin{description}
	\item[Case (i):] If $r > {\fes}$, then due to the item-(\ref{fes-triangle-s-club-relation}) of Lemma \ref{lemma:fes-vertex-triangle-part}, every vertex $r$-triangle $s$-club $S$ is contained in $D$.
	Hence, if $\ell > |D|$, it is clear that $(G, \ell)$ is a no-instance.
	In such a situation, we return a trivial no-instance which is a diamond with $\ell = 4$ and $r = 3$.
	Otherwise, $\ell \leq |D|$, and we return $(G[D],\ell)$ as the output instance.
	Let us first justify that $(G[D], \ell)$ is a yes-instance if and only if $(G, \ell)$ is a yes-instance.
	
	For the forward direction ($\Rightarrow$), let $(G, \ell)$ be a yes-instance.
	Due to item-(\ref{fes-triangle-s-club-relation}) of Lemma \ref{lemma:fes-vertex-triangle-part}, every vertex $r$-triangle $s$-club is contained in $D$.
	As $(G, \ell)$ is a yes-instance, it must be that $\ell \leq |D|$ and a solution is a subset of $D$.
	Hence, $(G[D], \ell)$ is a yes-instance.
	
	For the backward direction ($\Leftarrow$), observe that any vertex $r$-triangle $s$-club of $(G[D], \ell)$ is a vertex $r$-triangle $s$-club of $(G, \ell)$, since $G[D]$ is an induced subgraph of $G$.
	Hence, the backward direction ($\Rightarrow$) follows.
	
	Since $|D| \leq 2{\fes}$, our output instance has at most $2{\fes}$ vertices.
	As deleting ${\fes}$ many edges from $G[D]$ removes all cycles from $G[D]$, it must be that the number of edges of $G[D]$ is at most $|D| - 1 + {\fes}$ which is at most $3{\fes} - 1$.
	
	\item[Case (ii):] If $r \leq {\fes}(G)$, considering $k = {\fes}(G)$, we prove by constructing a sequence sets and subgraphs such that there exists a sequence $Y_0 \subseteq Y_1 \subseteq \cdots \subseteq Y_k$ of vertex subsets of $Y$, and a sequence of subgraphs $T_0, T_1,\ldots,T_k$ of $G$ such that $T_0 = G - F$, $Y_0 = \emptyset$, and the following properties are satisfied: 
	% the following three properties are satisfied.
	\begin{itemize}
		\item for every $i \in \{0,1,\ldots,k\}$, $|Y_i| \leq i$, and $T_i$ is constructed by adding exactly $i$ edges from $F$ into $T_0$, 
	%	\item $T_0 = G - F$,
		\item for every $i \in \{0,1,\ldots,k-1\}$, $T_{i+1}$ is constructed from $T_i$ by adding exactly one edge from $F$ that is not already in $T_{i}$,
		\item for every $i \in \{0,1,\ldots,k-1\}$, if there is $w \in Y_{i+1} \setminus Y_{i}$, then $w$ is satisfied by the edge $uv \in T_{i+1} \setminus T_i$.
	\end{itemize}
	
	{\bf Base case:} $i = 0$.
	Since $T_0 = G - F$, $Y_0 = \emptyset$ and each of the three items follow.
	The starting step is to start with $T_0 = G - F$ and $Y_0 = \emptyset$.
	
	{\bf Induction step:} Starting from $i = 0,\ldots,k-1$ in this sequence, we construct $T_{i+1}$ from $T_{i}$.
	%%%%
	By {\em induction hypothesis}, we assume that these three properties of this claim hold true for every $i \leq j$.
	
	Now, we consider $i = j+1$.
	Due to the induction hypothesis, for every $0 \leq i \leq j$, $T_i$ is constructed from $T_0$ by adding exactly $i$ edges of $F$ into $T_0$.
	Furthermore, due to the induction hypothesis, for every $1 \leq i \leq j$, $|Y_i| \leq i$, and $T_i$ is constructed from $T_{i-1}$ by adding exactly one edge of $F$ and if $w \in Y_i \setminus Y_{i-1}$, then $w$ is satisfied by the edge $uv \in T_i \setminus T_{i-1}$ with a triangle in $G$.
	
	We consider $i = j+1$. 
	Initialize $Y_{j+1} := Y_j$.
	We choose an edge $u^*v^* \in F \setminus T_j$, and set $T_{j+1} = T_j \cup \{u^* v^*\}$.
	Hence, the second condition is satisfied for $T_{j+1}$.
	%Since every vertex of $G$ participates in a triangle of $G$, it follows from Lemma \ref{lemma:spanning-feedback-edge-triangle-constraint} that every vertex of $V(G) \setminus D$ is satisfied by at least an edge of $u^*v^*$. 
	Due to Observation \ref{obs:feedback-edge-satisfies-unique-vertex}, there is at most one vertex $w \in V(G) \setminus D$ that is satisfied by $u^*v^*$.
	We can find out in polynomial-time whether there exists a vertex $w \in V(G) \setminus D$ or not that is satisfied by $u^*v^*$.
	If such a vertex $w$ exists and not already in $Y_j$, then we set $Y_{j+1} = Y_j \cup \{w\}$.
	
	Moreover, if $w \in Y$ exists that is satisfied by the feedback edge $u^*v^*$, then due to Observation \ref{obs:feedback-edge-satisfies-unique-vertex}, there does not exist any other vertex $w' \in Y$ that is satisfied by $u^*v^*$ with a triangle in $G$.
	Hence, no other vertex can be added to $Y_{j+1}$.
	Since $|Y_j| \leq j$, it follows that $|Y_{j+1}| \leq j+1$.
	As $T_{j}$ was constructed from $T_0$ by adding exactly $j$ edges, and $u^*v^* \in F \setminus T_j$ is added to construct $T_{j+1}$, it implies that $T_{j+1}$ is constructed from $T_0$ by adding $j+1$ edges.
	Therefore, $|Y_{j+1}| \leq j+1$ and exactly $j+1$ edges are added to construct $T_{j+1}$ from $T_0$ satisfying the first condition.
	
	The second condition is satisfied because $T_{j+1}$ is constructed from $T_j$ by adding exactly one edge $u^*v^* \in F$ not already in $T_j$.
	For the third condition, suppose $w \in Y_{j+1} \setminus Y_j$.
	Then, note that we added $w$ to $Y_{j+1}$ only when $w$ is not already in $Y_{j}$ and $w$ is satisfied by $u^*v^*$.
%	If such a $w \in Y$ does not exist, then $Y_{j+1} = Y_j$.
%	If $w \in Y_{j+1} \setminus Y_j$ already exists, then due to Observation \ref{obs:feedback-edge-satisfies-unique-vertex}, there does not exist any other vertex $w' \in V(G) \setminus D$ that is satisfied by $u^*v^*$ with a triangle in $G$.
	Therefore, the third condition is also satisfied.
	
	The above construction ensures us that every vertex of $G$ that is part of a triangle is added into $Y_i \cup D$ and $|Y_k| \leq k = {\fes}$.
	Therefore, $|D \cup Y_k| \leq 3{\fes}$.
	Observe that the graph $G[D \cup Y_k] - F$ is a forest, and has at most $3{\fes}$ vertices.
	Hence, this graph has at most $3{\fes} - 1$ edges.
	Therefore, $G[D \cup Y_k]$ has at most $4{\fes} - 1$ edges.
	
	Finally, it remains to prove that $(G, \ell)$ is a yes-instance if and only if $(G[D \cup Y_k], \ell)$ is a yes-instance.
	Clearly, the backward direction ($\Leftarrow$) is trivial since $G[D \cup Y_k]$ is an induced subgraph of $G$ and a vertex $r$-triangle $s$-club of $G[D \cup Y_k]$ is a vertex $r$-triangle $s$-club of $G$.
	
	For the forward direction ($\Rightarrow$), assume that $S$ is a vertex $r$-triangle $s$-club of size $\ell$ in $G$.
	What is crucial here is that our construction of $Y_k$ ensures that all vertices of $Y$ that is part of some triangle in $G$ is added.
	To see this, note that construction of $T_{j+1}$ from $T_j$ involves adding an edge $xy$ from $F \setminus T_j$.
	Subsequently, if there is $w \in Y \setminus Y_j$ such that $w$ is satisfied by the edge $xy$, then $w$ is added.
	If $w$ added to $Y_{j+1}$, then $w$ is satisfied by the edge $xy$, it follows due to Lemma \ref{lemma:spanning-feedback-edge-triangle-constraint} that $\{x, y, w\}$ is a triangle in $G$.
	
	Let us now argue that a vertex $w \in Y \setminus Y_k$ is not part of any triangle in $G$.
	Consider any $w \in Y \setminus Y_k$.
	If $w$ is part of a triangle in $G$, then there must be $xy \in F$ such that $\{x, y, w\}$ is a triangle in $G$.
	Then, due to Lemma \ref{lemma:fes-vertex-triangle-part}, $w$ is satisfied by $xy \in F$.
	Suppose $xy$ is the edge that was added to construct $T_{h+1}$ from $T_h$.
	In that step, we also checked if there is any vertex from $Y$ that is satisfied by $xy$.
	Since $w$ is one such vertex, it must be that $w$ was either in $Y_h$ or $w$ was added to $Y_h$ to obtain $Y_{h+1}$.
	Hence, $w \in Y_k$, contradicting our assumption.
	Therefore, if a vertex $w \in Y \setminus Y_k$, then $w$ is not part of any triangle in $G$.
	Hence, any vertex $r$-triangle $s$-club of $G$ is contained in $G[D \cup Y_k]$.
	It implies that $G[D \cup Y_k]$ is a yes-instance.
\end{description}

Our case analysis has been mutually exhaustive.
In both the cases $r \leq {\fes}(G)$ or $r > {\fes}(G)$, we have proved that the output instance is equivalent to the input instance.
%\todo[inline]{Diptapriyo: needs to prove equivalent instance.}
%Furthermore, every vertex of $D$ that are the endpoints of the edges of $F$ is preserved in this construction, and due to Lemma \ref{lemma:spanning-feedback-edge-triangle-constraint} and Observation \ref{obs:feedback-edge-satisfies-unique-vertex}, every vertex of $V(G) \setminus D$ that is satisfied by a feedback edge is added to $Y_k$.
If $r > {\fes}(G)$, we have proved that the output instance has at most $2{\fes}$ vertices and at most $3{\fes} - 1$ edges.
Otherwise, $r \leq {\fes}(G)$ and we have proved that the output instance has at most $3{\fes}$ vertices and $4{\fes} - 1$ edges.
This completes the proof.
\end{proof}

%\subsection{Result for {\seeded} {\twoclub}}
%\label{sec:seeded-2-club}
%
%In this section, we provide a linear sized kernel for {\seeded} {\twoclub}.
%We use $(G, W, \ell)$ to denote the input instance.
%Our reduction rules are the following.
%
%\begin{reduction rule}
%\label{rule:high-degree-with-seed}
%If there is a vertex $v$ such that $|N_G[v]| \geq \ell$ and $W \subseteq N_G[v]$, then output a trivial yes-instance of constant size.
%\end{reduction rule}
%
%If this reduction rule is applied even once, then we already have a kernel of constant size.
%So, moving forward, we assume that this reduction rule is not applicable.
%We inherit a few important terminologies that will be important for us to obtain our linear sized kernel.
%Consider the input graph $G = (V, E)$ with a smallest feedback edge set $F \subseteq E(G)$.

%%%other-results
\subsection{XP algorithm parameterized by the $h$-index of the graph}

Now, we provide the proof of Theorem \ref{thm:h-index-XP-algorithm}.

{\ThmHIndex*}

\begin{proof}
Let $G = (V, E)$ be an undirected graph.
The first step is to find the $h$-index of $G$, and a set of vertices $X$ of maximum possible cardinality such that each vertex of $X$ has degree at least $|X|$.
	This set $X$ can be found in polynomial time as follows: Compute the degree of every vertex of $G$ and sort them in the descending order $d_1, d_2, \dotsc, d_n$. Scan this list and find the largest index $k$ such that $d_k \geq k$. The set $X$ is the top $k$ vertices in the list.
	
%	iterating over all possible subsets of $V(G)$ starting from 0 to ${\hind}(G)$.
%	If ${\hind}(G) = k$, then this set $X$ can be found in $n^{\OO(k)}$-time.
%	\todo[inline]{Diptapriyo: please update this part to sorting idea.}

	Our next step is to guess a subset $Y \subseteq X$ that intersects our potential solution $S^*$ with $X$.
	There are $2^{|X|}$ such guesses.
	Note that every vertex of $Z = V(G) \setminus X$ has degree at most $k$ in $G$.
	Given a fixed set $Y$, we define a relation $\sim$ on the vertices of $Z$ as follows.
	For $x, y \in Z$, if $N_G(x) \cap Y = N_G(y) \cap Y$, then $x \sim y$.
	It is not very hard to observe that $\sim$ is an equivalence relation.
	Consider the equivalence classes of $Z$ under $\sim$.
	For every equivalence class $T$ of $Z$, we consider $|T| + 1$ choices, every choice either chooses a vertex $v \in T$ to be included in the potential solution $S^*$ or decides that no vertex from $T$ will be included at all.
	
	Let $T_1,\ldots,T_q$ denote the set of  guessed equivalence classes of $Z$ under $\sim$ from which one vertex $v_i \in T_i$ is part of the potential solution.
	An equivalence classes $T$ is in {\em $Y$-conflict} with an equivalence class $T'$ if $N_G(T) \cap N_G(T') \cap Y = \emptyset$.
	Consider an equivalence class $T_i$ that is in $Y$-conflict with some other equivalence class $T_j$ such that $j \neq i$.
	Observe that for both $v_i \in T_i$ and $v_j \in T_j$ to be part of the solution, since $v_i$ and $v_j$ have no common neighbor in $Y$, it must be that the distance between $v_i$ and $v_j$ in $G - X$ is at most 2.
	If we are to pick any additional vertex $u$ from $T_i$ other than $v_i$, then it must be that $u$ must be in distance at most 4 from $v_i$.
	
	Let $J \subseteq [q]$ to be the indices such that $T_j$ is in $Y$-conflict with some other equivalence class $T_i$ such that $i \neq j$.
	For every equivalence class $T_i$ such that $i \in J$, we consider the vertices $B_i \subseteq T_i$ that are in distance at most 4 in $G - X$.
	Since the degree of every vertex in $G - X$ is at most $k$ in $G - X$, it must be that $|B_i| \leq k^4$.
	Now, we guess a subset $A_i \subseteq B_i$ containing $v_i$ that we consider to be part of the solution and every vertex of $A_i$ is at distance at most 2 in $G[A_i]$.
	We consider the vertex set $Z_1 = \cup_{i \in [J]} A_i$ that are the set of guessed vertices from all the equivalence classes that are in $Y$-conflict with some other equivalence class.
	Our current partial set is $Y \cup Z_1$.
	
	Now, we look at $I = J \setminus [q]$, the indices corresponding to the equivalence classes $T_i$ that are not in $Y$-conflict with any other equivalence class and consider $W = \cup_{i \in I} T_i$.
	For every equivalence class $T_i$ such that $i \in I$, are yet to decide which vertex to add to $Y \cup Z_1$ and which vertex we are not to add.
	
	We iteratively remove vertices from $W$. If there exist a vertex $w \in W$ that participates in less than $r$ triangles in the graph $G[Y \cup Z_1 \cup W]$, we remove $w$ from $W$, update the triangle counts for each vertex in $W$ (as $W$ has changed) and repeat the process until no such vertex remains in $W$.
	
%	Note that since the vertices in $W$ are part of twin-classes that is not in conflict with any other twin class, for any $u \in W$ and $v \in Y \cup Z_1 \cup W$, there exist a vertex $y \in Y$ such that both $u$ and $v$ are adjacent to $y$; witnessing the distance 2 property between $u$ and $v$. This witness is not destroyed by deletion of veritces in $W$. Thus, the $2$-club propertty is maintained.
	
%	Once we find that every $u \in W$ is part of at least $r$ triangles in $G[Y \cup Z_1 \cup W]$, then we consider the graph $G[Y \cup Z_1 \cup W]$.
	If $Y \cup Z_1 \cup W$ is a 2-club of size at least $\ell$ and satisfies vertex $r$-triangle property, then we output that the instance is a yes-instance. 
	Otherwise, this guess of $Y \subseteq X$ and the equivalence classes is incorrect, and we move to a different guess.

Note that deletions from $W$ do not destroy the $2$-club property: every vertex of $W$ (and of $Z_1$) has a common neighbour in $Y$ with every other vertex of $Y \cup Z_1 \cup W$. Those witnesses lie in $Y$, and we never delete vertices from $Y$; hence every pair of vertices in the final set $Y \cup Z_1 \cup W$ still has a length-$\leq 2$ path (via the same witness from $Y$), so the $2$-club property is preserved. Any vertex that we delete from $W$ cannot be part of any optimal solution as it does not participate in $r$ triangles. Thus, if we correctly guessed the subset $Y \cup Z_1$, then $Y \cup Z_1 \cup W$ is our optimal solution.

	Now, we argue the running-time of our algorithm.
	There are $2^{|X|}$ many guesses of $X$, and for every such guess $Y$, there are $(n+1)^{2^{|Y|}}$ many guesses of twin-classes with a vertex to be part of the solution.
	Subsequently, for every equivalence class $T_i$ such that $i \in J$, we guess over all subsets of size at most $k^4$ for the choice of $A_i$.
	After that, we process the vertices of the equivalence classes in $T_j$ for all $j \in I$ in polynomial-time.
	Hence, the overall running-time of our algorithm is $2^k n^{\OO(2^k)} 2^{\OO(k^4)} n^{\OO(1)}$-time.
	As ${\hind}(G) = k$, this provides us algorithm with our desired running-time.
\end{proof}

%{\ThmBipartite*}

%\input{other-param-short.tex}

\section{Conclusions}
\label{sec:conclusion}

%%%conclusion slide

Our paper has kickstarted a systematic study of the structural parameterizations of {\rVT} {\twoclub} for every $r \geq 1$.
But, for every $s \geq 3$, the FPT status of {\rVT} {\sclub} parameterized by ${\trw}$ is an open question.
%In-fact, when $s \geq 3$, the FPT status of {\sclub} (without any constraint on the number of triangles) is also open.
It is also unclear if our Lemma \ref{lemma:bipartite-algorithm} can be extended to an XP algorithm when distance to bipartite graph is considered as parameter.
Hence, FPT status of {\rVT} {\twoclub} for distance to bipartite graph also remains open.
Additionally, other interesting graph parameters, such as distance to cluster, distance to co-cluster, degeneracy etc would be interesting future research direction.
%In-fact, our XP algorithm parameterized by $h$-index does not rule out the existence of an XP algorithm parameterized by degeneracy.
%Hence, it is also open if {\trangle} {\twoclub} is FPT/W-hard/para-NP-hard when degeneracy is considered as parameter.

%\section{Parameterized by the cluster edge deletion number}
%\label{sec:cluster-edge-deletion}
%
%\input{cluster-edge-deletion.tex}

%%
%% Bibliography
%%

%% Please use bibtex, 

%%%\bibliography{references}

%\newpage

%\appendix
%\input{appendix}

\end{document}